\definecolor{myblue}{RGB}{94, 129, 181}
\definecolor{myorange}{RGB}{225, 156, 36}
\definecolor{mygreen}{RGB}{143, 176, 50}
\def\R{\mathbb{R}}
\def\sF{{\mathcal F}}
\def\sA{{\mathcal A}}
\def\sM{{\mathcal M}}
\def\E{\mathbb{E}}
\def\sF{\mathcal{F}}
\def\P{\mathbb{P}}
\def\bQ{\mathbb{Q}}
\def\N{\mathbb N}
\newcommand{\ot}{\bar{t}}
\numberwithin{equation}{section}
\theoremstyle{plain}                
\newtheorem{theorem}{Theorem}[section]
\newtheorem{lemma}[theorem]{Lemma}
\newtheorem{proposition}[theorem]{Proposition}
\theoremstyle{definition}           
\newtheorem{definition}[theorem]{Definition}
\theoremstyle{remark}               
\begin{document}

\pagenumbering{arabic} \pagestyle{plain}
\begin{center}
  {\large \bf Endogenous noise trackers in a Radner equilibrium}
  \ \\ \ \\
Jin Hyuk Choi\footnote{Jin Hyuk Choi is supported by the National Research Foundation of Korea (NRF) grant funded by the Korea government (MSIT) (No. 2020R1C1C1A01014142 and No. 2021R1A4A1032924).} \\
Ulsan National Institute of Science and Technology (UNIST)
\ \\ \ \\
Kim Weston\footnote{
Kim Weston acknowledges support by the National Science Foundation under Grant No. DMS\#1908255 (2019-2022). Any opinions, findings and conclusions or recommendations expressed in this material are those of the author and do not necessarily reflect the views of the National Science Foundation (NSF).}\\
Rutgers University
\ \\

\ \\

\today
\end{center}
\vskip .5in

\abstract{We prove the existence of an incomplete Radner equilibrium in a model with exponential investors and an endogenous noise tracker.  We analyze a coupled system of ODEs and reduce it to a system of two coupled ODEs in order to establish equilibrium existence.  As an application, we study the impact of the endogenous noise tracker on welfare by comparing to a model with an exogenous noise trader.  We show that the aggregate welfare in the endogenous noise tracker model is bigger for a sufficiently large stock supply, but the welfare comparison depends in a non-trivial manner on the other model parameters.

\vskip .3in
\noindent{\textit{Keywords:} Radner equilibrium, Incompleteness, Welfare, Noise trader}\\

\section{Introduction}\label{section:intro}

We study a financial equilibrium in an economy with a strategic noise trader, which we call an endogenous noise tracker.  We prove global existence of a Radner equilibrium and explore its equilibrium implications.  In the model, a finite number of utility-maximizing exponential investors and a single endogenous noise tracker trade in a continuous-time financial market.   The exponential investors derive utility from consumption of terminal wealth in a pure-exchange economy{\color{black}, while the endogenous noise tracker is incentivized to track a noisy target and maximizes expected wealth.}

The use of noise traders is a standard modeling tool in equilibrium theory.  Price-inelastic noise traders often model uninformed traders' demands or serve as a way to express unmodeled demands.  These traders are introduced in a wide array of works such as Grossman and Stiglitz~\cite{GS80AER}, Kyle~\cite{K85E}, Hellwig~\cite{H80JET}, Vayanos~\cite{V01JF}, and the vast literature that has followed.  These exogenously-determined, price-inelastic noise traders do not choose demand processes based on a decision problem.  In contrast, Sannikov and Skrzypacz~\cite{SS16wp} and Choi et.\,al.~\cite{CLS21MAFE} use noise traders who are incentivized to track a noisy target.  Such noise trackers maximize their expected wealth from trading but are penalized for deviating from a Brownian motion.  Our equilibrium incorporates noise trading of the second kind, where the endogenous noise tracker is incentivized to track a noisy signal.

Our main contribution is to prove the existence of a Radner equilibrium.  Incomplete Radner equilibria are difficult to study mathematically because standard tools and simplifications, such as a representative agent, are often not applicable.  Approaches with partial differential equations (PDEs) and backward stochastic differential equations (BSDEs) have been successful in proving abstract existence; see, for example, Choi and Larsen~\cite{CL15FS}, Xing and \v{Z}itkovi\'{c}~\cite{XZ18AP}, and Escauriaza et.\,al.~\cite{ESX20wp}.  Exogenous noise in equilibrium has been successfully studied in G\^{a}rleanu and Pedersen~\cite{GP16JET} and Bouchard et.\,al.~\cite{BFHMK18FS} by relaxing the utility maximizers' problems from exponential investors to linear-quadratic optimizers.

Our approach derives a system of coupled ordinary differential equations (ODEs) that are functions of time.  This derivation is possible because we propose a functional form for the equilibrium stock price, which is similar to the Nash equilibrium stock price structure in Chen et.\,al.~\cite{CCLS20wp}.  The ODE system consists of 15 coupled equations.  Upon inspection, we reduce the system to two core coupled ODEs, which allows us to prove the existence of global solutions.

Given the prevalence of noise traders of different varieties in the equilibrium literature, it is natural to consider the equilibrium implications of such modeling choices.  We ask the question:
\begin{center}
  \it Which modeling scenario do the utility-maximizing investors prefer -- an exogenous noise trader or an endogenous noise tracker?
\end{center}
Our analysis shows that for a large enough stock supply, aggregate welfare is larger in the endogenous noise tracker model compared to an exogenous noise trader model. However, we observe non-trivial and non-monotonic dependencies of the aggregate welfare difference in the other model parameters and across multiple parameters. Therefore, for a specific choice of the model parameters, numerical computations are necessary to decide which type of noise trader can have a positive/negative impact on the aggregate welfare.

The paper is organized as follows.  Section~\ref{section:set-up} describes the model setting.  Section~\ref{section:main} states and proves the main result, Theorem~\ref{thm:radner-endogenous}.  The economic implications of an endogenous noise tracker are explored by a welfare analysis in Section~\ref{section:welfare}.  We compare the aggregate welfare in our model against a baseline model, which is a Radner equilibrium model with an exogenous noise trader.  The exogenous noise trader model is included in the Appendix, Section~\ref{section:exogenous}.

\section{Model set-up}\label{section:set-up}
For simplicity, we normalize the trading time horizon as $1$. Let $\P$ be a probability measure, and $(D_t, Y'_t)_{t\in[0,1]}$ be two independent, one-dimensional Brownian motions under $\P$ with constant values $(D_0, Y_0')$ at time $t=0$, zero drifts, and constant volatilities $(\sigma_D,\sigma_{Y'})$. The augmented standard Brownian filtration is denoted by
\begin{align}\label{sFi}
\sF_{t}:=\sigma(D_u,Y'_u)_{u\in[0,t]},\quad t\in[0,1].
\end{align}

The market consists of two traded securities: a bank account and stock.  The bank account is in zero-net supply with a constant zero interest rate.  The stock is in a constant net supply of $\Sigma$, where $\Sigma\geq 0$ is a nonnegative constant.  All prices are denominated in units of a single consumption good.

The equilibrium stock price will be determined endogenously in equilibrium as a continuous semimartingale and is denoted $S = (S_t)_{t\in[0,1]}$.  The terminal dividend is modeled by terminal value $D_1$ of the Brownian motion $D$.  The terminal equilibrium stock price is exogenously pinned down as:
\begin{align}\label{pp10}
S_1 =D_1,\quad \P\text{-a.s.}
\end{align}

We model the stock positions of a group of $j\in \{1,...,I\}$, $I\in\N$, utility-maximizing investors.  Their stock position processes over time are denoted $\theta_j=\left(\theta_{j,t}\right)_{t\in[0,1]}$, and investor $j$ is endowed with an initial stock position of $\theta_{j,0-}\in\R$ and a zero initial position in the bank account.  

There are a range of possibilities when defining admissible trading strategies for exponential investors, such as in Delbaen et.\,al.~\cite{6AP} and Biagini and Sirbu~\cite{BS12S}.  Here, we employ the approach of Choi and Larsen~\cite{CL15FS}, who also study an equilibrium with exponential investors.  We let $\sM$ denote the collection of $\P$-equivalent probability measures under which the stock price process $S$ is a local martingale.  A strategy $\theta$ is called \textit{admissible for $\bQ\in\sM$} if $\theta$ is adapted to $\left(\sF_t\right)_{t\in[0,1]}$, measurable, $S$-integrable, and the wealth process $X^\theta$ is a $\bQ$-supermartingale, where $X^\theta$ is define below in~\eqref{def:wealth}.  We denote the collection of admissible strategies with the associated measure $\bQ$ by $\sA(\bQ)$.

The wealth at time $t$ associated with $\theta\in\sA(\bQ)$ is denoted by $X^\theta_t$ where
\begin{equation}\label{def:wealth}
  X^\theta_t := \theta_{0-}S_0 + \int_0^t \theta_u\,dS_u,\quad t\in[0,T].
\end{equation}

The utility-maximizing investors seek to maximize their expected utility from terminal wealth.  We assume that all investors have identical exponential utility functions
\begin{align}\label{eqn:exp-util}
 -\exp\left(-a x \right)\; \text{ at time }t=1,\quad x\in\R,
\end{align}
where $a>0$ is their common risk aversion coefficient. Based on the utility function in \eqref{eqn:exp-util}, investor $j$ seeks to solve
\begin{align}\label{optproblem}
\sup_{ \theta\in \sA(\bQ)}\E\left[-\exp\left(-a X^\theta_{1}\right)\right].
\end{align}
The measure $\bQ$ is not investor-specific, unlike in Choi and Larsen~\cite{CL15FS}, because our investors all share a common risk aversion coefficient.  It is possible to extend our model to include heterogenous risk aversion coefficients.  Because, as we shall see below, our model with identical risk aversion coefficients already produces ambiguous welfare implications, we do not pursue such an extension.

We introduce an endogenous noise tracker, similar to Sannikov and Skrzypacz~\cite{SS16wp}.  Rather than hold an exogenously determined number of shares, the endogenous noise tracker is incentivized to track a noisy signal through her decision problem. The exogenously given target of the noise tracker is denoted by 
\begin{align}\label{pp1}
Y_t :=Y_0 +\int_0^t Y'_udu,\quad t\in[0,1],
\end{align}
where we recall that $Y'$ is a Brownian motion with zero drift and constant volatility $\sigma_{Y'}$. The noise tracker has linear-quadratic preferences that encourage her to trade towards the target $Y$ in \eqref{pp1}:
\begin{align}\label{optproblem_tracker}
\sup_{ \theta\in \sA_N}\E\left[ X_{1}^\theta-\int_0^1 \kappa\left(\theta_{t}-Y_t\right)^2dt\right],
\end{align}
where the constant $\kappa>0$ measures the motivation to track the noise $Y$, and $\sA_N$ is the collection of strategies admissible for the noise tracker. 
The noise tracker's trading admissibility condition differs from the exponential utility maximizers' condition since her optimization problem is linear-quadratic rather than exponential.  A trading strategy $\theta$ is \textit{admissible for the noise tracker} if $\theta$ is adapted to $(\sF_t)_{t\in[0,1]}$, measurable, and $\E\left[\int_0^1 \theta_u^2du \right]<\infty$.  

We denote the noise trackers's stock holdings over time by $\theta_{N,t}$.  She holds $\theta_{N,0-}=Y_0$ shares initially, and, like the utility maximizers, holds a zero position initially in the bank account.  With the presence of the endogenous noise tracker, the stock market clearing condition becomes
\begin{equation}\label{eqn:clearing2}
  \sum_{j=1}^I\theta_{j,t}+\theta_{N,t} = \Sigma, \quad t\in[0,1].
\end{equation}

The exogenous noise in our model is similar to G\^arleanu and Pedersen~\cite{GP16JET} and Bouchard et.~al.~\cite{BFHMK18FS}; however, our approach to establishing equilibrium existence is different because we use exponential investors instead of their linear-quadratic objectives.  The earlier works established existence using pointwise optimization, whereas our existence result relies on analysis of ODEs derived from the investors' Hamilton-Jacobi-Bellman PDEs.

In order to prove the existence of a Radner equilibrium, we conjecture a form for the exponential investors' value functions and derive a coupled system of ODEs.  Using time $t$, wealth $x$, and states $Y$ and $Y'$ as state variables for the value function, we conjecture that each investor's value function will take the form
\begin{align}
\begin{split}\label{eqn:val-fn-form}
  &V(t,x,Y,Y') \\
  &= -\exp\left(-a\left(x+g_1(t)+g_2(t)Y+g_3(t)Y'+ g_{22}(t)Y^2+ g_{23}(t)Y Y'+ g_{33}(t)Y'^2\right)\right),
  \end{split}
\end{align}
where  $g_1$, $g_2$, $g_3$, $g_{22}$, $g_{23}$, $g_{33}:[0,1]\rightarrow\R$ are smooth functions of time with terminal conditions
\begin{equation}\label{eqn:term-cond}
  g_1(1) = g_2(1) = g_3(1) = g_{22}(1) = g_{23}(1) = g_{33}(1) = 0.
\end{equation}

\begin{definition}[Radner equilibrium with endogenous noise tracking]\label{def:end-eq}
  Trading strategies $\hat\theta_1,\ldots,\hat\theta_I$ and $\hat\theta_N\in\sA_N$ and a continuous semimartingale $S=(S_t)_{t\in[0,1]}$ form a \textit{Radner equilibrium with endogenous noise tracking} if there exists a measure $\widehat\bQ$ under which $S$ is a local martingale such that
  \begin{enumerate}
    \item \textit{Strategies are optimal:}  For $j=1,\ldots,I$, we have $\hat\theta_j\in\sA(\widehat\bQ)$ solves \eqref{optproblem} with measure $\widehat\bQ$, and $\hat\theta_N\in\sA_N$ solves \eqref{optproblem_tracker}, where $S$ is the corresponding stock price process.
    \item \textit{Markets clear:} We have
      \begin{align}\label{eqn:clearing2}
 \sum_{j=1}^{I} \hat\theta_{j,t}+\hat\theta_{N,t}=\Sigma,\quad t\in[0,1].
 \end{align} 
  \end{enumerate}
\end{definition}

\section{Main result}\label{section:main}

Theorem~\ref{thm:radner-endogenous} below establishes the existence of a Radner equilibrium with endogenous noise tracking and makes the conjectured form of \eqref{eqn:val-fn-form}-\eqref{eqn:term-cond} rigorous.  

\begin{theorem}[Radner existence with endogenous noise tracking] \label{thm:radner-endogenous}\label{thm:radner-endogenous} Let $\Sigma\ge0$, $a, \sigma_D^2, \kappa>0$, and $\sum_{j=1}^I \theta_{j,0-} + Y_0=\Sigma$. Then, there exists a unique smooth solution to the coupled system of ODEs 
for $t\in[0,1]$:
\begin{align*}
\begin{split}
  g_{33}'(t) &= \frac{2 a \sigma_{Y'}^2 g_{33}(t)^2 \left(a \sigma_{Y'}^2 \beta (t)^2 \left(a \sigma_D^2+4 I \kappa\right)+\left(a \sigma_D^2+2 I \kappa\right)^2\right)}{\left(a \sigma_D^2+a \sigma_{Y'}^2 \beta (t)^2+2 I \kappa\right)^2}-\frac{2\kappa \beta(t)}{2I\kappa+a\sigma_D^2},\quad g_{33}(1)=0,\\
  \beta'(t)&=\frac{4 a I \sigma_{Y'}^2 g_{33}(t) \beta (t) \kappa}{a \sigma_D^2+a \sigma_{Y'}^2 \beta (t)^2+2 I \kappa} - \frac{2a \kappa \sigma_D^2}{2I \kappa + a \sigma_D^2}(1-t),\quad \beta(1)=0,
\end{split}
\end{align*}
such that a Radner equilibrium with endogenous noise tracking exists. {\color{black}The functions $g_{33}$ and $\beta$ are strictly positive on $[0,1)$.}  The equilibrium stock price process is given by
\begin{align}\label{R11}
 S_t &:=D_t +\mu(t)+\alpha(t) Y_t+\beta(t) Y'_t,\quad t\in[0,1],
\end{align}
where for $t\in[0,1]$, we have
\begin{align*}
  \begin{split}
    \mu(t) &:= -\tfrac{2a \kappa \sigma_D^2 \Sigma }{2I \kappa + a \sigma_D^2}(1-t),\\
    \alpha(t) &:= \tfrac{2a \kappa \sigma_D^2}{2I \kappa + a \sigma_D^2}(1-t).
  \end{split}
\end{align*}

There exists $\widehat\bQ\in\sM$ such that the investors' optimal stock holdings $\hat\theta_j\in\sA(\widehat\bQ)$ are given by
\begin{equation}\label{def:hat-theta}
  \hat\theta_{j,t}=\frac{2 \kappa (\Sigma -Y_t)}{2I \kappa +a\sigma_D^2} - \frac{2a \sigma_{Y'}^2 \beta (t) g_{33}(t)Y'_t}{a \sigma_D^2+a \sigma_{Y'}^2 \beta (t)^2+2 I \kappa},\quad t\in[0,1],\ j\in\{1,\ldots,I\},
\end{equation}
and the noise tracker optimally holds $\hat\theta_N\in\sA_N$ where
\begin{equation}\label{def:tracker-hat-theta}
  \hat\theta_{N,t} = \frac{2 I\kappa Y_t + a\sigma_D^2 \Sigma}{2I \kappa +a\sigma_D^2} + \frac{2a I \sigma_{Y'}^2 \beta (t) g_{33}(t)Y'_t}{a \sigma_D^2+a \sigma_{Y'}^2 \beta (t)^2+2 I \kappa},\quad t\in[0,1].
\end{equation}
In the class of equilibria that has equilibrium stock prices of the form~\eqref{R11} for continuously differentiable functions $\mu$, $\alpha$, $\beta$, the equilibrium established here is unique.
\end{theorem}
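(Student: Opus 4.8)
The plan is to derive the ODE system from the investors' Hamilton-Jacobi-Bellman (HJB) equations under the conjectured value function form \eqref{eqn:val-fn-form}, verify the conjecture, and then establish existence and uniqueness of the reduced two-ODE core system. The argument proceeds in the following order. First I would compute the dynamics of the conjectured stock price \eqref{R11} under $\P$: since $S_t = D_t + \mu(t) + \alpha(t)Y_t + \beta(t)Y'_t$ and $D, Y'$ are independent Brownian motions with $dY_t = Y'_t\,dt$, I get $dS_t = \big(\mu'(t) + \alpha'(t)Y_t + \alpha(t)Y'_t + \beta'(t)Y'_t\big)dt + \sigma_D\,dD_t + \beta(t)\sigma_{Y'}\,dY'_t$, so the drift $b_t$ and the (constant-in-state, time-dependent) diffusion coefficients are explicit. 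Then I would write each exponential investor's HJB equation for \eqref{optproblem} with state variables $(t,x,Y,Y')$, plug in the ansatz \eqref{eqn:val-fn-form}, and perform the pointwise maximization over $\theta$ to obtain the candidate optimal holding as an affine function of $Y$ and $Y'$ with coefficients built from $\mu,\alpha,\beta$ and the $g$'s. Matching this against \eqref{eqn:clearing2} together with the noise tracker's first-order condition from \eqref{optproblem_tracker} pins down $\mu,\alpha$ in closed form and couples $\beta$ to $g_{33}$.

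\textbf{Deriving the ODE system and reducing it.} Substituting the optimizer back into the HJB equation and collecting the coefficients of $1, Y, Y', Y^2, YY', Y'^2$ yields the stated $15$-equation coupled ODE system (six $g$'s, plus equations for $\mu,\alpha,\beta$ and auxiliary quantities). I would then observe that $g_1,g_2,g_3,g_{22},g_{23}$ decouple from the pair $(g_{33},\beta)$ in the sense that once $(g_{33},\beta)$ is known, the remaining equations are linear (in fact can be solved by quadrature) with the given zero terminal conditions, so the whole system is solvable as soon as the core pair is. The core pair is exactly the two displayed ODEs: a Riccati-type equation for $g_{33}$ forced by $\beta$, and a linear-in-$\beta$ equation for $\beta$ forced by $g_{33}$, both with terminal condition $0$ at $t=1$.

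\textbf{Existence, uniqueness, and sign of the core solution.} Running time backward from $t=1$, the right-hand sides of the core system are locally Lipschitz in $(g_{33},\beta)$ (the denominator $a\sigma_D^2 + a\sigma_{Y'}^2\beta^2 + 2I\kappa$ is bounded below by $a\sigma_D^2 + 2I\kappa > 0$ for all real $\beta$), so Picard–Lindelöf gives a unique local smooth solution. To get global existence on $[0,1)$ back to $t=0$, I would derive an a priori bound: near $t=1$ the forcing terms $-2\kappa\beta/(2I\kappa+a\sigma_D^2)$ and $-\tfrac{2a\kappa\sigma_D^2}{2I\kappa+a\sigma_D^2}(1-t)$ are small, and one shows inductively (e.g. via a comparison/barrier argument or a Lyapunov functional in the backward variable $\tau = 1-t$) that $g_{33}$ and $\beta$ stay nonnegative and grow at most at a controlled rate, ruling out blow-up on the compact interval. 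The positivity claim "$g_{33},\beta>0$ on $[0,1)$" follows from the same comparison: at $t=1$ both vanish, their backward derivatives are driven to be positive by the explicit $-(1-t)$ forcing term, and the coupling preserves the positive cone. The main obstacle I anticipate is precisely this global-in-time control of the Riccati equation for $g_{33}$: Riccati equations generically blow up in finite time, so the delicate point is to exploit the specific structure of the coefficients (in particular that the quadratic-in-$g_{33}$ term carries a sign that, combined with the $\beta$-feedback, yields an a priori upper bound) to show the solution survives all the way to $t=0$. Finally, for the uniqueness statement in the restricted class, any equilibrium with price of the form \eqref{R11} for $C^1$ functions $\mu,\alpha,\beta$ must, by the market-clearing and first-order conditions, satisfy the same ODE system with the same terminal data; uniqueness of that system then forces $\mu,\alpha,\beta$ — and hence $S$ and all the optimal holdings — to coincide with those constructed here.
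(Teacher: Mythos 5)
Your overall architecture --- derive the coefficient ODEs from the HJB equations and market clearing, reduce to the $(g_{33},\beta)$ pair, prove global existence by exploiting the favorable sign of the quadratic term together with positivity, and get uniqueness in the restricted price class from ODE uniqueness --- is the paper's route: the paper reverses time to a forward initial value problem for $z_1(t)=\beta(1-t)$ and $z_2(t)=g_{33}(1-t)$, shows $z_1,z_2>0$ by a first-crossing argument, and then drops the negative feedback terms to obtain explicit polynomial bounds ($z_1\leq Ct^2$, $z_2\leq Ct^3$) that preclude finite-time blow-up; your comparison/barrier sketch is the same idea. Two points, however, are genuine gaps.

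First, the reduction as you state it is circular. You claim that once $(g_{33},\beta)$ is known the remaining equations are linear, but (i) in the raw system \eqref{eqn:endogenous-ODEs}--\eqref{eqn:endogenous-abm} the equation for $\beta$ reads $\beta'=\cdots-\alpha$ and $\alpha'$ depends on $g_{23}$, so $(g_{33},\beta)$ is \emph{not} a closed subsystem of the full system; and (ii) given the core pair, $g_{22}'$ is quadratic in $g_{23}$, not linear. The core system displayed in the theorem already has the identities $\alpha(t)=\tfrac{2a\kappa\sigma_D^2}{2I\kappa+a\sigma_D^2}(1-t)$ and $g_{23}(t)=\tfrac{2\kappa}{2I\kappa+a\sigma_D^2}\beta(t)$ substituted in; these closed-form relations (and the analogous ones for $\mu$, $g_2$, $g_3$, $g_{22}$) must be discovered and verified before the two-ODE reduction is legitimate, which is exactly what Lemma~\ref{ode_existence_en} does via the explicit formulas \eqref{en_ode_solutions}--\eqref{en_ode_solutions_f}. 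Second, and more importantly, you stop at the HJB first-order condition, which only produces a \emph{candidate} strategy. Optimality over the class $\sA(\widehat\bQ)$, which contains non-Markovian competitors whose wealth is merely a $\widehat\bQ$-supermartingale, requires a verification argument: one must define $\widehat\bQ$ by $d\widehat\bQ/d\P=\widehat V_1/\widehat V_0$, prove that $\widehat V_t=V(t,\widehat X_t,Y_t,Y_t')$ is a true $\P$-martingale (the paper invokes Corollary 3.5.16 of Karatzas--Shreve, using that the integrands are affine in the Gaussian states), prove that $\widehat X$ is a true $\widehat\bQ$-martingale (stopping times, Gronwall, Fatou), and then run the Fenchel--Legendre inequality $\E[U(X_1^\theta)]\leq\E[\widetilde U(ae^{-a\widehat X_1})]+a\,\E^{\widehat\bQ}[X_1^\theta]\,\E[e^{-a\widehat X_1}]$ against the supermartingale property of the competitor. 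Without this step the claim that $\hat\theta_j$ solves \eqref{optproblem} is not established; the tracker's optimality, by contrast, genuinely does follow from pointwise maximization as you suggest.
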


{\color{black}\label{item-2-comments}
The functional form of $S$ in \eqref{R11} is inspired by the Nash equilibrium in Chen et.\,al.~\cite{CCLS20wp}.  Since the functions $\alpha$ and $\beta$ are strictly positive on $[0,1)$, the equilibrium stock price is positively correlated with the noise components $Y$ and $Y'$.  $Y'$ represents the instantaneous change in the noise levels, and the term $\alpha(t)Y'_t$ in the equilibrium stock price conveys this instantaneous impact.  $Y$ is the absolute noise level that the endogenous noise tracker is incentivized to hold.  The term $\beta(t)Y_t$ represents the persistent noise effect on the equilibrium stock price.

The function $\mu$ is negative (and zero when the supply of stock shares $\Sigma$ is zero).  The negativity of the term $\mu(t)$ in $S_t$ shows that the stock should be cheap enough to attract the risk-averse investors.

The terms $\hat\theta_j$ in \eqref{def:hat-theta} and $\hat\theta_N$ in \eqref{def:tracker-hat-theta} are the optimal stock holdings.  Since $g_{33}$ and $\beta$ are strictly positive on $[0,1)$, $\hat\theta_j$ is affected negatively by both $Y'$ and $Y$, while $\hat\theta_N$ is positively affected.  This outcome is expected because the endogenous noise tracker is incentivized to hold $Y$ shares, while the utility-maximizing investors must choose the stock holdings accordingly to clear the market.

\label{item-2-comments-kappa}
The parameter $\kappa>0$ measures the endogenous noise tracker's motivation to track the noise $Y$.  Higher values for $\kappa$ correspond to the endogenous noise tracker's increasing pressure to track $Y$ versus achieving a higher expected wealth.  In the extreme case when $\kappa\rightarrow\infty$, the endogenous noise tracker's optimization problem \eqref{optproblem_tracker} only considers the incentive to hold $Y$ shares without regard to expected wealth.  The $\kappa\rightarrow\infty$ case corresponds to an equilibrium with an exogenous noise trader, which is explored in detail in Section~\ref{section:welfare} and Appendix~\ref{section:exogenous}.

In the other extreme case when $\kappa\rightarrow 0$, the endogenous noise tracker places no emphasis on tracking the noisy target and instead only cares about maximizing expected wealth.  In this case, the endogenous noise tracker corresponds to a risk-neutral investor.  When a risk-neutral investor trades amongst the risk-averse utility-maximizing investors, the resulting equilibrium is trivial:  the risk-neutral investor optimally holds all $\Sigma$ shares, the risk-averse investors optimally hold no shares, and the equilibrium stock price is given by $S_t = D_t$ for all $t\in[0,1]$.
}

To prove Theorem~\ref{thm:radner-endogenous}, we first obtain the existence and uniqueness result for the system of ODEs in the above theorem.

\begin{lemma}\label{en_core_ode_lemma}
Let $a, \sigma_D^2, \kappa>0$. Then, the following two-dimensional initial value problem has a unique solution for $t\in [0,\infty)$:
\begin{equation}
\begin{split}\label{en_core_ode}
z_1'(t) &=\left(\tfrac{2a\kappa \sigma_D^2}{2I \kappa + a \sigma_D^2} \right) t -\tfrac{4 a \kappa  I \sigma_{Y'}^2 z_1(t)z_2(t) }{a \sigma_D^2+a \sigma_{Y'}^2  z_1(t)^2+2 \kappa  I},\quad z_1(0)=0,\\
z_2'(t)&= \tfrac{2 \kappa   \, z_1(t)}{a \sigma_D^2+2 \kappa  I}- \tfrac{2 a \sigma_{Y'}^2  \left(\left(a \sigma_D^2+2 \kappa  I\right)^2+a \sigma_{Y'}^2 z_1(t)^2 \left(a \sigma_D^2+4 \kappa  I \right)\right)z_2(t)^2}{\left(a \sigma_D^2+a \sigma_{Y'}^2 z_1(t)^2+2 \kappa  I \right)^2} ,\quad z_2(0)=0.
\end{split}
\end{equation}
{\color{black}The functions $z_1$ and $z_2$ are strictly positive on $(0,\infty)$.}
\end{lemma}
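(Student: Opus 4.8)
The plan is to treat \eqref{en_core_ode} as a forward initial value problem and run the standard program: local well-posedness, positivity, a priori bounds, global continuation. First I would note that the two right-hand sides of \eqref{en_core_ode} are $C^\infty$ functions of $(z_1,z_2)$ on all of $\R^2$, since the denominators $a\sigma_D^2+a\sigma_{Y'}^2 z_1^2+2\kappa I$ and its square are bounded below by $a\sigma_D^2+2\kappa I>0$ and hence never vanish. Picard--Lindel\"of then gives a unique solution on a maximal forward interval $[0,T_{\max})$, and this solution is $C^\infty$. Throughout write $c_1:=\tfrac{2a\kappa\sigma_D^2}{2I\kappa+a\sigma_D^2}>0$ and $c_3:=\tfrac{2\kappa}{a\sigma_D^2+2\kappa I}>0$.

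Next I would prove $z_1>0$ and $z_2>0$ on $(0,T_{\max})$. The subtle point is the behaviour near $t=0$: both $z_1'(0)$ and $z_2'(0)$ vanish, so the sign just after the origin is not immediate. Using $z_1(0)=z_2(0)=0$ together with smoothness of the solution, a short computation of derivatives of \eqref{en_core_ode} at $t=0$ gives $z_1''(0)=c_1>0$, then $z_2''(0)=0$, then $z_2'''(0)=c_3c_1>0$; hence $z_1(t)=\tfrac{c_1}{2}t^2+O(t^3)>0$ and $z_2(t)=\tfrac{c_1c_3}{6}t^3+O(t^4)>0$ for all sufficiently small $t>0$ (alternatively this can be obtained by a Gr\"onwall/comparison bootstrap against the degenerate $\sigma_{Y'}=0$ system, whose solution is exactly $(\tfrac{c_1}{2}t^2,\tfrac{c_1c_3}{6}t^3)$). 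Now set $t^\ast:=\inf\{t\in(0,T_{\max}):z_1(t)\le 0\text{ or }z_2(t)\le 0\}$; the previous expansions give $t^\ast>0$, and if $t^\ast<T_{\max}$ then at $t^\ast$ one of $z_1,z_2$ equals $0$ while the other is $\ge 0$. If $z_1(t^\ast)=0$, the nonlinear term in the $z_1$-equation vanishes and $z_1'(t^\ast)=c_1 t^\ast>0$, contradicting the fact that $z_1>0$ on $(0,t^\ast)$ with $z_1(t^\ast)=0$ forces $z_1'(t^\ast)\le 0$. If $z_1(t^\ast)>0$ and $z_2(t^\ast)=0$, the quadratic term in the $z_2$-equation vanishes and $z_2'(t^\ast)=c_3 z_1(t^\ast)>0$, the same contradiction. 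Hence $t^\ast=T_{\max}$ and positivity holds on all of $(0,T_{\max})$.

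With positivity established, I would extract a priori bounds. Since $z_1,z_2>0$, the subtracted term in the $z_1$-equation is nonnegative, so $z_1'(t)\le c_1 t$ and therefore $z_1(t)\le \tfrac{c_1}{2}t^2$ on $[0,T_{\max})$. Likewise the factor multiplying $z_2(t)^2$ in the $z_2$-equation is $2a\sigma_{Y'}^2$ times a ratio of nonnegative quantities, hence nonnegative, so $z_2'(t)\le c_3 z_1(t)\le \tfrac{c_1c_3}{2}t^2$ and thus $z_2(t)\le \tfrac{c_1c_3}{6}t^3$ on $[0,T_{\max})$. Consequently, on every finite subinterval $[0,T]$ the solution stays in a fixed compact rectangle contained in the (open) domain $\R^2$ of the vector field, so the usual continuation criterion rules out $T_{\max}<\infty$; therefore $T_{\max}=\infty$, which completes the proof.

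I expect the only genuinely delicate step to be the near-origin analysis in the second paragraph: because both components leave the origin with vanishing first derivative, the sign of $z_1$ and $z_2$ for small $t>0$ cannot be read off directly and must be obtained at second/third order (or via a comparison bootstrap). Everything after that is routine ODE bookkeeping, the key structural observation being that once positivity is known, both nonlinear terms only push $z_1$ and $z_2$ \emph{downward}, which is precisely what yields the polynomial a priori bounds and prevents finite-time blow-up.
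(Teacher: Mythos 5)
Your proposal is correct and follows essentially the same route as the paper's proof: Picard--Lindel\"of for local existence, second/third-order Taylor expansion at $t=0$ to get positivity near the origin, a first-hitting-time contradiction to propagate positivity over the maximal interval, and the resulting polynomial a priori bounds $z_1(t)\le\tfrac{c_1}{2}t^2$, $z_2(t)\le\tfrac{c_1c_3}{6}t^3$ (identical to the paper's) to rule out finite-time blow-up. The only cosmetic difference is that the paper phrases the final step as a contradiction with $\bar t<\infty$ rather than invoking the continuation criterion directly.
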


\begin{proof}
The local Lipschitz structure of \eqref{en_core_ode} produces a unique solution around the point $t=0$ by the Picard-Lindel\"of theorem (see, e.g., Theorem II.1.1 in Hartman~\cite{H02}). Furthermore, there exists a maximal interval of existence $0\in (\underline{t},\ot)\subseteq [-\infty,\infty]$ and we will argue by contradiction to show that $\ot=\infty$. 

Suppose that $\bar t<\infty$. By taking derivatives through \eqref{en_core_ode} and inserting the initial values $z_1(0)=z_2(0)=0$ into these derivatives at $t=0$, we find
\begin{align}
&z_1(0)=z_1'(0)=0<z_1''(0)=\tfrac{2a\kappa \sigma_D^2}{2I \kappa + a \sigma_D^2},\\
&z_2(0)=z_2'(0)=z_2''(0)=0<z_2'''(0)=\tfrac{4a\kappa^2 \sigma_D^2}{(2I \kappa + a \sigma_D^2)^2}.
\end{align}
The above equalities and inequalities imply that there exists $\epsilon>0$ such that $z_1(t)>0$ and $z_2(t)>0$ for $t\in (0,\epsilon)$. 

To reach a contradiction with $\ot<\infty$, we define $t_0$ as
 \begin{align}
 t_0:=\inf\{t>0: z_1(t)=0 \textrm{  or  } z_2(t)=0 \},
 \end{align}
which is the first time ($z_1$ or $z_2$) reaches zero strictly after time $t=0$. According to the previous argument, there exists $\epsilon>0$ such that $t_0\geq \epsilon$.

If $t_0<\bar t$, then there are two possibilities: (i) In case $z_1(t_0)=0$, then \eqref{en_core_ode} gives $z_1'(t_0)=\left(\frac{2a\kappa \sigma_D^2}{2I \kappa + a \sigma_D^2}\right)t_0>0$, which contradicts the definition of $t_0$. (ii) In case $z_2(t_0)=0$ (and $z_1(t_0)>0$ by (i) above), then \eqref{en_core_ode} gives $z_2'(t_0)=\frac{2\kappa \, z_1(t_0)}{2I \kappa + a \sigma_D^2}>0$, which contradicts the definition of $t_0$. Therefore, we conclude that $z_1(t)>0$ and $z_2(t)>0$ for $t\in (0,\bar t)$. This observation and \eqref{en_core_ode} imply
\begin{align}\label{eqn:z-bounds}
0<z_1(t)< \left(\tfrac{a\kappa \sigma_D^2}{2I \kappa + a \sigma_D^2} \right) t^2, \quad 0<z_2(t)<  \left(\tfrac{2a\kappa^2 \sigma_D^2}{3(2I \kappa + a \sigma_D^2)^2} \right) t^3, \quad \textrm{for} \quad t\in (0,\bar t).
\end{align}
The above boundedness properties contradicts $\bar t<\infty$. {\color{black}Finally, since $\bar t = \infty$, we have that $z_1$ and $z_2$ are strictly positive on $(0,\infty)$ by~\eqref{eqn:z-bounds}.}
\end{proof}

The utility-maximizing investors' value functions are conjectured to have the form \eqref{eqn:val-fn-form}-\eqref{eqn:term-cond}. We also conjecture the form of the noise tracker's value function $V_N$ with state processes time $t$, wealth $x$, $Y$, and $Y'$, and we search for smooth functions $f_1$, $f_2$, $f_3$, $f_{22}$, $f_{23}$, $f_{33}:[0,1]\rightarrow\R$ such that
\begin{equation}\label{eqn:tracker-value-fn-form}
  V_N(t,x,Y,Y') = x+f_1(t)+f_2(t)Y+f_3(t)Y'+ f_{22}(t)Y^2+ f_{23}(t)Y Y'+ f_{33}(t)Y'^2,
\end{equation}
with $f_1(1)=f_2(1)=f_3(1)=f_{22}(1)=f_{23}(1)=f_{33}(1)=0$.
The investors' HJB equations and the market clearing condition produce the ODE system for the coefficient functions in \eqref{eqn:val-fn-form} and \eqref{eqn:tracker-value-fn-form}, and $(\alpha,\beta,\mu)$ in \eqref{R11}:
\begin{align}\label{eqn:endogenous-ODEs}
\begin{split}
g'_1(t) &= \tfrac{a \sigma_{Y'}^2 g_{3}(t)^2 \left(a \sigma_{Y'}^2 \beta (t)^2 \left(a \sigma_D^2+4 I \kappa\right)+\left(a \sigma_D^2+2 I \kappa\right)^2\right)}{2 \left(a \sigma_D^2+a \sigma_{Y'}^2 \beta (t)^2+2 I \kappa\right)^2}\\
&\quad + \tfrac{2 a \Sigma \kappa \left( a \sigma_{Y'}^2 g_3(t) \beta(t) - \kappa \Sigma \right)  \left(\sigma_D^2+\sigma_{Y'}^2 \beta (t)^2\right)}{ \left(a \sigma_D^2+a \sigma_{Y'}^2 \beta (t)^2+2 I \kappa\right)^2} - \sigma_{Y'}^2 g_{33}(t),\\
g_2'(t)&= \tfrac{a\sigma_{Y'}^2 g_{23}(t) \left(g_{3}(t) \left(a \sigma_{Y'}^2 \beta (t)^2 \left(a \sigma_D^2+4 I \kappa\right)+\left(a \sigma_D^2+2 I \kappa\right)^2\right)+2 a \Sigma  \beta (t) \kappa \left(\sigma_D^2+\sigma_{Y'}^2 \beta (t)^2\right)\right)}{\left(a \sigma_D^2+a \sigma_{Y'}^2 \beta (t)^2+2 I \kappa\right)^2}\\
&\quad-\tfrac{2a \kappa \left(\sigma_D^2+\sigma_{Y'}^2 \beta (t)^2\right) \left(a \sigma_{Y'}^2 g_{3}(t) \beta (t)-2 \Sigma  \kappa\right)}{\left(a \sigma_D^2+a \sigma_{Y'}^2 \beta (t)^2+2 I \kappa\right)^2}, \\
g_3'(t) &= \tfrac{2 a \sigma_{Y'}^2 g_{33}(t) \left(g_{3}(t) \left(a \sigma_{Y'}^2 \beta (t)^2 \left(a \sigma_D^2+4 I \kappa\right)+\left(a \sigma_D^2+2 I \kappa\right)^2\right)+2 a \Sigma  \beta (t) \kappa \left(\sigma_D^2+\sigma_{Y'}^2 \beta (t)^2\right)\right)}{\left(a \sigma_D^2+a \sigma_{Y'}^2 \beta (t)^2+2 I \kappa\right)^2} -g_{2}(t),\\
g'_{22}(t) &= \tfrac{a \sigma_{Y'}^2 g_{23}(t)^2 \left(a \sigma_{Y'}^2 \beta (t)^2 \left(a \sigma_D^2+4 I \kappa\right)+\left(a \sigma_D^2+2 I \kappa\right)^2\right)}{2 \left(a \sigma_D^2+a \sigma_{Y'}^2 \beta (t)^2+2 I \kappa\right)^2}  - \tfrac{2 a \kappa \left( a \sigma_{Y'}^2 g_{23}(t) \beta (t) + \kappa \right) \left(\sigma_D^2+\sigma_{Y'}^2 \beta (t)^2\right)}{ \left(a \sigma_D^2+a \sigma_{Y'}^2 \beta (t)^2+2 I \kappa\right)^2},\\
g_{23}'(t)&= \tfrac{2a \sigma_{Y'}^2 g_{33}(t)\left(g_{23}(t) \left(a \sigma_{Y'}^2 \beta (t)^2 \left(a \sigma_D^2+4 I \kappa\right)+\left(a \sigma_D^2+2 I \kappa\right)^2\right)-2a \kappa \beta(t)\left( \sigma_D^2+ \sigma_{Y'}^2 \beta(t)^2 \right) \right)}{ \left(a \sigma_D^2+a \sigma_{Y'}^2 \beta (t)^2+2 I \kappa\right)^2} - 2 g_{22}(t),\\
g_{33}'(t) &= \tfrac{2 a \sigma_{Y'}^2 g_{33}(t)^2 \left(a \sigma_{Y'}^2 \beta (t)^2 \left(a \sigma_D^2+4 I \kappa\right)+\left(a \sigma_D^2+2 I \kappa\right)^2\right)}{\left(a \sigma_D^2+a \sigma_{Y'}^2 \beta (t)^2+2 I \kappa\right)^2}-g_{23}(t),\\
g_{1}(1)&=g_{2}(1)=g_{3}(1)=g_{22}(1)=g_{23}(1)=g_{33}(1)=0,\\
f_{1}'(t) &= -\tfrac{\kappa \left(a \sigma_{Y'}^2 \beta (t) (I g_{3}(t)+\Sigma  \beta (t))+a \Sigma  \sigma_D^2\right)^2}{\left(a \sigma_D^2+a \sigma_{Y'}^2 \beta (t)^2+2 I \kappa\right)^2}-\sigma_{Y'}^2 f_{33}(t),\\
f_{2}'(t) &= -\tfrac{2 a I \kappa \left(a \sigma_{Y'}^2 g_{23}(t) \beta (t)+2 \kappa\right) \left(\sigma_{Y'}^2 \beta (t) (I g_{3}(t)+\Sigma  \beta (t))+\Sigma  \sigma_D^2\right)}{\left(a \sigma_D^2+a \sigma_{Y'}^2 \beta (t)^2+2 I \kappa\right)^2},\\
f_{3}'(t) &= -\tfrac{4 a^2 I \sigma_{Y'}^2 g_{33}(t) \beta (t) \kappa \left(\sigma_{Y'}^2 \beta (t) (I g_{3}(t)+\Sigma  \beta (t))+\Sigma  \sigma_D^2\right)}{\left(a \sigma_D^2+a \sigma_{Y'}^2 \beta (t)^2+2 I \kappa\right)^2}-f_{2}(t),\\
f_{22}'(t)&= \tfrac{a\kappa \left( \sigma_{Y'}^2 \beta(t)(\beta(t)-I g_{23}(t)) + \sigma_D^2  \right) \left( a \sigma_{Y'}^2 \beta(t)(M g_{23}(t)+\beta(t))+ a\sigma_D^2 +4I \kappa  \right)   }{\left(a \sigma_D^2+a \sigma_{Y'}^2 \beta (t)^2+2 I \kappa\right)^2}, \\
f_{23}'(t) &= -\tfrac{4 a I^2 \sigma_{Y'}^2 g_{33}(t) \beta (t) \kappa \left(a \sigma_{Y'}^2 g_{23}(t) \beta (t)+2 \kappa\right)}{\left(a \sigma_D^2+a \sigma_{Y'}^2 \beta (t)^2+2 I \kappa\right)^2} - 2f_{22}(t),\\
f_{33}'(t) &= -\tfrac{4 a^2 I^2 \sigma_{Y'}^4  g_{33}(t)^2 \beta(t)^2 \kappa }{\left(a \sigma_D^2+a \sigma_{Y'}^2 \beta (t)^2+2 I \kappa\right)^2} - f_{23}(t),\\
f_{1}(1)&=f_{2}(1)=f_{3}(1)=f_{22}(1)=f_{23}(1)=f_{33}(1)=0,
  \end{split}
\end{align}
where
\begin{align}\label{eqn:endogenous-abm}
\begin{split}
\alpha'(t)&=-\tfrac{2 a \kappa \left(\sigma_{Y'}^2 \beta (t) (\beta (t)-I g_{23}(t))+\sigma_D^2\right)}{a \sigma_D^2+a \sigma_{Y'}^2 \beta (t)^2+2 I \kappa},\quad \alpha(1)=0,\\
\beta'(t)&=\tfrac{4 a I \sigma_{Y'}^2 g_{33}(t) \beta (t) \kappa}{a \sigma_D^2+a \sigma_{Y'}^2 \beta (t)^2+2 I \kappa} - \alpha(t),\quad \beta(1)=0,\\
\mu'(t)&=\tfrac{2 a \kappa \left(\sigma_{Y'}^2 \beta (t) (I g_{3}(t)+\Sigma  \beta (t))+\Sigma  \sigma_D^2\right)}{a \sigma_D^2+a \sigma_{Y'}^2 \beta (t)^2+2 I \kappa},\quad \mu(1)=0.
\end{split}
\end{align}
{\color{black}In Appendix A, we explain more about the derivation of the ODE system.}

This ODE system will be used to verify optimality of the equilibrium trading strategies in the proof of Theorem~\ref{thm:radner-endogenous}.  Lemma~\ref{ode_existence_en} below establishes the existence and uniqueness of a smooth solution to the system~\eqref{eqn:endogenous-ODEs}-\eqref{eqn:endogenous-abm}.
\begin{lemma}\label{ode_existence_en}
Let $a, \sigma_D^2, \kappa>0$. Then, there exists a unique smooth solution to the coupled system of ODEs \eqref{eqn:endogenous-ODEs}-\eqref{eqn:endogenous-abm} for $t\in [0,1]$.  {\color{black}The functions $\beta$ and $g_{33}$ are strictly positive on $[0,1)$.}
\end{lemma}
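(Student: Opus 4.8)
The plan is to exploit a triangular (cascade) structure hidden in the fifteen equations \eqref{eqn:endogenous-ODEs}--\eqref{eqn:endogenous-abm}. By inspection, the five functions $\alpha,\beta,g_{22},g_{23},g_{33}$ satisfy a \emph{closed} subsystem: the right-hand sides of their equations in \eqref{eqn:endogenous-ODEs}--\eqref{eqn:endogenous-abm} involve only these five functions and the fixed parameters. The remaining ten equations are then linear or pure integrals once this core is known: $(g_2,g_3)$ solve a coupled linear system driven by $(\beta,g_{23},g_{33})$; $g_1$ and $\mu$ are direct integrals; and $f_2,f_{22}$, then $f_3,f_{23}$, then $f_{33}$, then $f_1$ form a cascade of integrals. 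Since every denominator in \eqref{eqn:endogenous-ODEs}--\eqref{eqn:endogenous-abm} is bounded below by $a\sigma_D^2+2I\kappa>0$, all right-hand sides are smooth in their arguments, and linear systems with continuous coefficients on the compact interval $[0,1]$ have unique global solutions; so the whole problem reduces to solving the five-dimensional core on $[0,1]$.

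For the core I would make explicit the reduction to the two-equation system of Theorem~\ref{thm:radner-endogenous}. Let $(\beta,g_{33})$ solve that reduced system. A direct computation shows that
\[
\alpha(t):=\tfrac{2a\kappa\sigma_D^2}{2I\kappa+a\sigma_D^2}(1-t),\qquad g_{23}(t):=\tfrac{2\kappa}{2I\kappa+a\sigma_D^2}\,\beta(t)
\]
turn the $\alpha'$-equation of \eqref{eqn:endogenous-abm} into an identity and the $\beta'$- and $g_{33}'$-equations into the two equations of the reduced system; recovering $g_{22}$ algebraically from the $g_{23}'$-equation and then checking that this $g_{22}$ satisfies its own equation in \eqref{eqn:endogenous-ODEs} completes the core. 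Existence of $(\beta,g_{33})$ comes from Lemma~\ref{en_core_ode_lemma} via the time reversal $t\mapsto 1-t$: the functions $\beta(t):=z_1(1-t)$, $g_{33}(t):=z_2(1-t)$ solve the reduced system on $[0,1]\subset[0,\infty)$ with $\beta(1)=g_{33}(1)=0$, and are strictly positive on $[0,1)$ because $1-t\in(0,1]\subset(0,\infty)$ there. With the five core functions in hand, the cascade described above produces $g_1,g_2,g_3,\mu,f_1,\dots,f_{33}$ one stage at a time.

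Uniqueness and smoothness are then routine. The core has a jointly smooth (hence locally Lipschitz) right-hand side, so by the Picard--Lindel\"of theorem its solution is locally unique; since the solution just constructed is defined on all of $[0,1]$, it is the unique one there. Uniqueness of the remaining functions follows from uniqueness for linear ODEs and for antiderivatives with prescribed terminal value. Smoothness propagates by bootstrapping: the core right-hand side is $C^\infty$, and once $(\alpha,\beta,g_{22},g_{23},g_{33})\in C^\infty([0,1])$ the coefficients of each subsequent linear or integral stage are $C^\infty$ in $t$, so their solutions are too; the positivity of $\beta$ and $g_{33}$ on $[0,1)$ has already been recorded.

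The step I expect to be the main obstacle is the algebraic bookkeeping of the reduction: verifying that the choices of $\alpha$ and $g_{23}$ above, together with the induced expression for $g_{22}$, are \emph{simultaneously} consistent with all five core equations --- in particular that the rational expression produced for $g_{22}'$ collapses to the form prescribed in \eqref{eqn:endogenous-ODEs}. Conceptually, once that identity is confirmed, global existence, uniqueness, smoothness, and the positivity of $\beta$ and $g_{33}$ all follow from Lemma~\ref{en_core_ode_lemma} and elementary ODE theory.
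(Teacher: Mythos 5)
Your proposal is correct and takes essentially the same route as the paper: both reduce the problem to the two-dimensional initial value problem of Lemma~\ref{en_core_ode_lemma} via a time reversal, write down explicit formulas for the remaining core functions, and then treat the residual equations as a linear/integral cascade whose uniqueness and smoothness are automatic. The only cosmetic difference is bookkeeping: you identify a five-function closed core $(\alpha,\beta,g_{22},g_{23},g_{33})$ and recover $g_{22}$ algebraically from the $g_{23}'$-equation before checking consistency with the $g_{22}'$-equation, whereas the paper lumps $(\alpha,\beta,\mu,g_{2},g_{3},g_{22},g_{23},g_{33})$ together and simply posits all eight formulas \eqref{en_ode_solutions} and verifies them directly; the ``obstacle'' algebraic identity you flag does indeed collapse as hoped (both sides of the $g_{22}'$-equation reduce to $-2a\kappa^2\sigma_D^2/(2I\kappa+a\sigma_D^2)^2$).
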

\begin{proof}
Once we show the existence of a solution of \eqref{eqn:endogenous-ODEs}-\eqref{eqn:endogenous-abm}, the local Lipschitz structure of the system ensures the uniqueness of the solution. To show the existence, we construct  a solution using Lemma \ref{en_core_ode_lemma}. In \eqref{eqn:endogenous-ODEs}-\eqref{eqn:endogenous-abm}, we observe that the ODEs for $\alpha, \beta, \mu, g_{33}, g_{23}, g_{22}, g_{3}, g_{2}$ do not depend on $g_1, f_1, f_2, f_3, f_{22}, f_{23}, f_{33}$. Hence, we first define $\alpha, \beta, \mu, g_{33}, g_{23}, g_{22}, g_{3}, g_{2}$ in terms of $z_1$ and $z_2$ in Lemma \ref{en_core_ode_lemma}:
\begin{equation}
\begin{split}\label{en_ode_solutions}
\alpha(t)&:=\tfrac{2a \kappa \sigma_D^2}{2I \kappa + a \sigma_D^2}(1-t),\\
\beta(t)&:=z_1(1-t),\\
\mu(t)&:=-\tfrac{2a \kappa \sigma_D^2 \Sigma }{2I \kappa + a \sigma_D^2}(1-t),\\
g_{33}(t)&:=z_2(1-t),\\
g_{23}(t)&:=\tfrac{2\kappa }{2I \kappa + a \sigma_D^2}\, z_1(1-t),\\
g_{22}(t)&:=\tfrac{2a \kappa^2 \sigma_D^2}{(2I \kappa + a \sigma_D^2)^2}(1-t),\\
g_{3}(t)&:=-\tfrac{2\kappa \Sigma }{2I \kappa + a \sigma_D^2}\,  z_1(1-t),\\
g_{2}(t)&:=-\tfrac{4a \kappa^2 \sigma_D^2 \Sigma }{(2I \kappa + a \sigma_D^2)^2}(1-t).
\end{split}
\end{equation}
By explicit computations using \eqref{en_core_ode}, we can check that \eqref{en_ode_solutions} satisfies the ODEs for $\alpha, \beta, \mu, g_{33}, g_{23}, g_{22}, g_{3}, g_{2}$ in \eqref{eqn:endogenous-ODEs}-\eqref{eqn:endogenous-abm}.\footnote{
{\color{black}
A more detailed explanation is provided here. The definitions of $\beta$, $g_{23}$,  and $g_3$ in \eqref{en_ode_solutions} imply that
\begin{align}
g_{23}(t)&=\tfrac{2\kappa }{2I \kappa + a \sigma_D^2}\, \beta(t),\quad g_{3}(t)=-\tfrac{2\kappa \Sigma }{2I \kappa + a \sigma_D^2}\,  \beta(t).\label{g_beta}
\end{align}
The definition of $\alpha$ in \eqref{en_ode_solutions} and $g_{23}$ in \eqref{g_beta} validate $\alpha$'s ODE in \eqref{eqn:endogenous-abm}.
The definitions of $\beta$, $\alpha$, and $g_{33}$ in \eqref{en_ode_solutions} and the expression for $z_1'$ in \eqref{en_core_ode} validate $\beta$'s ODE in \eqref{eqn:endogenous-abm}.
The definition of $\mu$ in \eqref{en_ode_solutions} and $g_3$ in \eqref{g_beta} validate $\mu$'s ODE in \eqref{eqn:endogenous-abm}.
The definitions of $\beta$, $g_{33}$, and $g_{23}$ in \eqref{en_ode_solutions} and  $z_2'$ in \eqref{en_core_ode} validate $g_{33}$'s ODE in \eqref{eqn:endogenous-ODEs}.
The definitions of $\beta$, $\alpha$, $g_{33}$, $g_{23}$, and $g_{22}$ in \eqref{en_ode_solutions} and $z_1'$ in \eqref{en_core_ode} validate $g_{23}$'s ODE in \eqref{eqn:endogenous-ODEs}.
The definition of $g_{22}$ in \eqref{en_ode_solutions} and relation \eqref{g_beta} validate the expression of $g_{22}'$ in \eqref{eqn:endogenous-ODEs}. 
The definitions for $\alpha$, $\beta$, $g_{22}$, $g_3$, and $g_2$ in \eqref{en_ode_solutions} and $z_1'$ in \eqref{en_core_ode} validate $g_3$'s ODE in \eqref{eqn:endogenous-ODEs}. Finally, the definition of $g_2$ in \eqref{en_ode_solutions} and the expressions in \eqref{g_beta} validate the $g_2$'s ODE in \eqref{eqn:endogenous-ODEs}.
}
}
 {\color{black}Moreover, Lemma~\ref{en_core_ode_lemma} ensures that $z_1$ and $z_2$ are strictly positive on $(0,\infty)$, which proves that $\beta$ and $g_{33}$ are strictly positive on $[0,1)$.}

Given \eqref{en_ode_solutions}, the ODEs for $g_1, f_1, f_2, f_3, f_{22}, f_{23}, f_{33}$ in \eqref{eqn:endogenous-ODEs}-\eqref{eqn:endogenous-abm} become a linear system of ODEs and we have the following explicit solutions:
\begin{equation}
\begin{split}\label{en_ode_solutions_f}
g_1(t)&:=\tfrac{2a \kappa^2 \sigma_D^2 \Sigma^2 }{(2I \kappa + a \sigma_D^2)^2}(1-t) + \sigma_{Y'}^2 \int_{t}^1 g_{33}(s)\, ds,\\
f_2(t)&:= \tfrac{4a I \kappa^2 \sigma_D^2 \Sigma }{(2I \kappa + a \sigma_D^2)^2}(1-t) ,\\
f_3(t)&:= \tfrac{2a I \kappa^2 \sigma_D^2 \Sigma }{(2I \kappa + a \sigma_D^2)^2}(1-t)^2 + \int_{t}^1 \tfrac{4a^2 I \kappa \Sigma \sigma_D^2 \sigma_{Y'}^2 g_{33}(s)\beta(s)}{(2I \kappa+a \sigma_D^2)\left(2I\kappa + a\sigma_D^2+ a\sigma_{Y'}^2 \beta(s)^2\right)}  \, ds   ,\\
f_{22}(t)&:=\left( \tfrac{4 I^2 \kappa^3}{(2I \kappa + a \sigma_D^2)^2} -\kappa \right) (1-t),\\
f_{23}(t)&:=\left( \tfrac{4 I^2 \kappa^3}{(2I \kappa + a \sigma_D^2)^2} -\kappa \right) (1-t)^2   + \int_{t}^1 \tfrac{8a I^2 \kappa^2  \sigma_{Y'}^2 g_{33}(s)\beta(s)}{(2I \kappa+a \sigma_D^2)\left(2I\kappa + a\sigma_D^2+ a\sigma_{Y'}^2 \beta(s)^2\right)}  \, ds ,\\
f_{33}(t)&:=\int_t^1 f_{23}(s)\, ds +    \int_{t}^1 \tfrac{4a^2 I^2 \kappa  \sigma_{Y'}^4 g_{33}(s)^2\beta(s)^2}{\left(2I\kappa + a\sigma_D^2+ a\sigma_{Y'}^2 \beta(s)^2\right)^2}  \, ds, \\
f_1(t)&:=  \tfrac{a^2 \kappa \sigma_D^4 \Sigma^2 }{(2I \kappa + a \sigma_D^2)^2}(1-t) + \sigma_{Y'}^2 \int_{t}^1 f_{33}(s)\, ds.\\
\end{split}
\end{equation}
Note that the integrals above are all finite due to $a, \sigma_D^2, \kappa>0$.
\end{proof}

\begin{proof}[Proof of Theorem \ref{thm:radner-endogenous}]\ 

Let $(\alpha,\beta,\mu,g_1,g_2,g_3,g_{22},g_{23},g_{33}, f_1, f_2, f_3, f_{22}, f_{23}, f_{33})$ be the unique smooth solution to the coupled system of ODEs \eqref{eqn:endogenous-ODEs}-\eqref{eqn:endogenous-abm}, whose expression is given in \eqref{en_ode_solutions}-\eqref{en_ode_solutions_f}.  {\color{black}By Lemma~\ref{ode_existence_en}, the functions $\beta$ and $g_{33}$ are strictly positive on $[0,1)$.}

  First, we prove verification for the utility-maximizing exponential investors via a duality approach.  
     For the function $V$ given in \eqref{eqn:val-fn-form}, we let the process $\widehat V$ be given by $\widehat V_t:=V(t,X^{\hat\theta}_t,Y_t,Y'_t)$, $t\in[0,1]$, where $\hat\theta$ is defined in~\eqref{def:hat-theta}.  We drop the subscript $j$ from $\hat\theta$ since all utility-maximizing investors are identical. For notational simplicity, we denote $\widehat X_t:= X^{\hat\theta}_t$.  We define the measure $\widehat\bQ$ by
  $$
    \frac{d\widehat\bQ}{d\P}:= \frac{\widehat V_1}{\widehat V_0},
  $$ 
  and we reason that $\widehat\bQ$ is a probability measure.  By \eqref{eqn:val-fn-form}, the terminal conditions \eqref{eqn:term-cond}, and the ODE system \eqref{eqn:endogenous-ODEs}-\eqref{eqn:endogenous-abm}, the dynamics of $\widehat V$ are given by
  $$
    d\widehat V_t = -a \widehat V_t\left(\hat\theta_t dD_t + \left(\beta(t) \hat\theta_t + g_3(t)+2Y'_tg_{33}(t)+Y_t g_{23}(t)\right)dY'_t\right).
  $$
    Since $\hat\theta_t$ is affine in $(Y_t,Y_t')$, the functions $\beta$, $g_3$, $g_{23}$, and $g_{33}$ are continuous functions of $t$, and $Y_t$ is a progressively measurable functional of $Y'$, we apply Corollary 3.5.16 of Karatzas and Shreve~\cite{KS91} to show that $(\widehat V_t)_{t\in [0,1]}$ is a martingale under $\P$. Thus, $\widehat\bQ$ is a probability measure.  Since $\widehat V_1 = -e^{-a \widehat X_1}$, we also have that $\widehat V_0 = \E\left[-e^{-a\widehat X_1}\right]$.
  
  Next, we show that $\widehat X$ is a $\widehat\bQ$-martingale by checking that it is a $\widehat\bQ$-local martingale and $\E^{\widehat\bQ}\left[\int_0^1 \hat\theta_t^2 d\left<S\right>_t\right]<\infty$.  Under $\widehat\bQ$, $\widehat D$ and $\widehat Y'$ are  Brownian motions, where
  \begin{align}
  \begin{split}\label{def:Qhat-BM}
    d\widehat{D}_t &:= dD_t + a\sigma_D^2 \hat\theta_t dt, \quad \widehat D_0:=0,\\
    d\widehat{Y}'_t &:= dY'_t + a\sigma_{Y'}^2 \left(\beta(t)\hat\theta_t + g_3(t)+2Y'_t g_{33}(t)+Y_t g_{23}(t)\right)dt, \quad \widehat Y'_0:=0.
   \end{split}
  \end{align}

  By \eqref{def:hat-theta} and \eqref{R11}, the dynamics of $\widehat X$ are given by
  $$
    d\widehat X_t = \hat\theta_t dS_t 
    = 
    \hat\theta_t\left(d\widehat D_t + \beta(t)d\widehat{Y}'_t\right).
  $$
 
Since $\beta(t)$ is bounded in $t\in [0,1]$, it suffices to show that $\E^{\widehat\bQ}\left[\int_0^1 \hat\theta^2_t dt\right]<\infty$ in order to prove that the $\widehat\bQ$-local martingale, $\widehat X$, is a $\widehat\bQ$-martingale.  Moreover, since $\hat\theta_t$ is affine in $(Y_t,Y'_t)$ and the coefficient functions that appear in the ODE system are all bounded in $t\in [0,1]$, showing that $\E^{\widehat\bQ}\left[\int_0^1 (Y'_t)^2 dt\right]<\infty$ is sufficient to prove the $\widehat{\bQ}$-martingale property of $\widehat X$. 
    To this end, we define stopping times
  $$
    \tau_k:=\inf\left\{t\geq 0:\ |Y'_t|\geq k\right\}\wedge 1, \quad k\geq 1.
  $$
  We appeal to the definition of $\widehat Y'$ in \eqref{def:Qhat-BM} to see that there exist constants $C_1, C_2\geq 0$ that are independent of $t$ and $k$ such that for all $k\geq 1$,  
  $$
    \E^{\widehat\bQ}\left[\left(Y'_{t\wedge\tau_k}\right)^2\right]
    \leq C_1 + C_2\,\E^{\widehat\bQ}\left[\int_0^t\left(Y'_{u\wedge\tau_k}\right)^2du\right], \quad t\in[0,1].
  $$
  Gronwall's inequality implies that for all $k\geq 1$,
  $$
    \E^{\widehat\bQ}\left[\left(Y'_{t\wedge\tau_k}\right)^2\right] \leq C_1 e^{C_2 t},
    \quad t\in[0,1].
  $$
  By Fatou's Lemma, we have
  $$
    \E^{\widehat\bQ}\left[\left(Y'_{t}\right)^2\right]
    \leq \liminf_{k\rightarrow\infty} \E^{\widehat\bQ}\left[\left(Y'_{t\wedge\tau_k}\right)^2\right] \leq C_1 e^{C_2 t}, \quad t\in[0,1],
  $$
  and thus,  
  $$
    \E^{\widehat\bQ}\left[\int_0^1 (Y'_t)^2\, dt\right] 
    \leq \int_0^1 C_1 e^{C_2t}dt < \infty.
  $$
  
  The exponential investors have utility functions $U(x):=-\exp(-ax)$, $x\in\R$.  The Fenchel-Legendre transform of the function $-U(-x)$ is given by
  $$
    \widetilde U(y):= \sup_{x\in\R}\left\{U(x)-xy\right\} = -\frac{y}{a}\left(1-\log\frac{y}{a}\right),\quad y>0.
  $$
  Therefore, for any admissible $\theta\in\sA(\widehat\bQ)$ with an associated wealth process $X^\theta$ beginning with initial wealth $X^\theta_0 = \widehat X_0$, we have
  \begin{align*}
    \E&\left[U\left(X^\theta_1\right)\right]\\
    &\leq \E\left[\widetilde U\left(a e^{-a \widehat X_1}\right) + a e^{-a \widehat X_1}X^\theta_1\right]\\
    &= \E\left[e^{-a\widehat X_1}\left(-a\widehat{X}_1-1\right)\right]+ a \E^{\widehat\bQ}[X^\theta_1]\cdot \E\left[e^{-a\widehat{X}_1}\right]
      \quad \text{by definition of $\widetilde U$ and $\widehat\bQ$}\\
    &= \E\left[U\left(\widehat X_1\right)\right]-a\E^{\widehat\bQ}[\widehat X_1]\cdot\E\left[e^{-a\widehat{X}_1}\right]+ a \E^{\widehat\bQ}[X^\theta_1]\cdot \E\left[e^{-a\widehat{X}_1}\right]\\
    &\leq \E\left[U\left(\widehat X_1\right)\right] \quad
    \text{since $\widehat X$ is a $\widehat\bQ$-martingale and $\theta\in\sA(\widehat\bQ)$},
  \end{align*}
which shows that $\hat\theta$ is the optimal strategy for the utility-maximizing investors.

Second, we show that $\hat\theta_N$ in \eqref{def:tracker-hat-theta} is optimal for the noise tracker.  In addition to being adapted and measurable, $\hat\theta_N$ is square-integrable with $\E\left[\int_0^1\hat\theta_{N,t}^2\,dt \right] <\infty$, since $\beta$ and $g_{33}$ are continuous functions of time and $Y$ and $Y'$ are Gaussian.

Pointwise maximization under the expectation and integral in~\eqref{optproblem_tracker} lead to
$$
  \hat\theta_{N,t} = Y_t + \frac{1}{2\kappa}\Big(\mu'(t) + \alpha'(t)Y_t + (\beta'(t)+\alpha(t))Y'_t\Big).
$$
Plugging in $\mu$, $\alpha$, $\beta$, and $g_{33}$ as given in \eqref{en_ode_solutions} agrees with the desired formula for $\hat\theta_N$ in~\eqref{def:tracker-hat-theta}.

Finally, the form of the optimal strategies in \eqref{def:hat-theta} and \eqref{def:tracker-hat-theta} ensure the market clearing condition in \eqref{eqn:clearing2}.
\end{proof}

\section{Welfare implications}\label{section:welfare}
Welfare describes well being.  In equilibrium, aggregate welfare analysis is important because it helps to explain how an economy's set-up and its resulting equilibrium resource allocation impact the well being of its participants.

We ask, 
\begin{center}
  {\it How does an endogenous noise tracker impact aggregate welfare?}
\end{center}
To set up a model that can answer this question, we must consider two important points.  First, absolute levels of welfare are not informative on their own.  Only relative welfare is of interest when compared to a baseline model's welfare.  Second, noise traders with exogenous noise often exist in equilibrium to fulfill an unmet modeling need.  For example, without noise traders in Grossman and Stiglitz~\cite{GS80AER}, Kyle~\cite{K85E}, Hellwig~\cite{H80JET}, or Vayanos~\cite{V01JF}, the resulting equilibria would be uninteresting.

Taking {\color{black}the first of} these two points into consideration, we need to compare welfare in our equilibrium to welfare in a related baseline model.  {\color{black}The second point means that we need to choose a baseline model that is similar to the endogenous noise tracker model and rich enough to be nontrivial.} We chose a Radner equilibrium with an exogenous noise trader, similar to the above cited works, as our baseline.  Though both exogenous noise traders and Radner equilibria are standard modeling features, our particular baseline model with both features combined has not been studied before.  We introduce this model in the Appendix, where we also include an equilibrium existence proof.

{\color{black}The exogenous noise trader baseline model corresponds to the extreme case of the endogenous noise tracker model when $\kappa\rightarrow\infty$.  In this case, the endogenous noise tracker cares only about tracking the noisy target and not about expected wealth.  For the endogenous noise tracker, infinite $\kappa$ can be interpreted as optimally holding $\hat\theta_N = Y$ shares, while the exogenous noise trader is defined by holding $Y$ shares.}

Which type of noise trader do utility maximizers prefer --  exogenous noise trading or an endogenous noise tracker? Based on a welfare analysis of the two models, the answer is that it depends.  We measure the utility maximizers' aggregate welfare by the sum of their certainty equivalents,
$$
  \sum_{j=1}^I \text{CE}_j,
$$
where the certainty equivalents $\text{CE}_j$ are defined implicitly by
\begin{equation*}
  -e^{-a \text{CE}_j} = \sup_{\theta\in\sA(\widehat\bQ)}\E\left[-\exp\left(-a X^\theta_{1}\right)\right],\quad j=1,\ldots,I.
\end{equation*}
The aggregate welfare is defined by
\begin{align*}
  \sum_{j=1}^I \text{CE}_j
  &= S_0\left(\Sigma-Y_0\right) + I\Big(g_1(0)+g_2(0)Y_0\\
  &\quad+g_3(0)Y_0'+ g_{22}(0)Y_0^2+ g_{23}(0)Y_0 Y_0'+ g_{33}(0)Y_0'^2\Big),
\end{align*}
where the functions $g_1$, $g_2$, $g_3$, $g_{22}$, $g_{23}$, and $g_{33}$ come from the functional form of the exponential investors' value functions in~\eqref{eqn:val-fn-form}-\eqref{eqn:term-cond}.  For both the exogenous and endogenous noise models, these functions are formally described and defined in the proofs of Theorem~\ref{thm:radner-endogenous} and Theorem~\ref{thm:radner-exogenous}. 
In the $Y_0=Y'_0=0$ case, Proposition~\ref{prop:welfare} below quantifies the welfare difference between exogenous noise trading and endogenous noise tracking.  To distinguish between quantities in the two models, we let the superscripts $ex$ and $en$ represent exogenous and endogenous quantities, respectively.
\begin{proposition}\label{prop:welfare}  For $Y_0=Y'_0=0$, 
the endogenous noise aggregate welfare is greater than the exogenous noise aggregate welfare if and only if
\begin{align}\label{ineq:welfare}
  \Sigma^2 > -\frac{2I^2\sigma_{Y'}^2\left(a\sigma_D^2+2\kappa I\right)^2}{a^3\sigma_D^6} \int_0^1\left(g^{en}_{33}(u)-g^{ex}_{33}(u)\right)du,
\end{align}
where $g^{ex}_{33}$ and $g^{en}_{33}$ are the functions $g_{33}$ with exogenous and endogenous noise from Theorems~\ref{thm:radner-exogenous} and \ref{thm:radner-endogenous}, respectively.
The inequality \eqref{ineq:welfare} holds for a sufficiently large stock supply $\Sigma$.
\end{proposition}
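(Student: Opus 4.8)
The plan is to evaluate both aggregate welfare expressions in closed form using the displayed formula for $\sum_{j=1}^I\mathrm{CE}_j$ and subtract. Setting $Y_0=Y_0'=0$ in that formula, the aggregate welfare in each model reduces to $S_0\Sigma+Ig_1(0)$, where $S_0$ is the time-$0$ equilibrium stock price and $g_1$ the first value-function coefficient in that model. Evaluating the equilibrium stock price at $t=0$ with $Y_0=Y_0'=0$ (using \eqref{R11} for the endogenous model and the analogous price formula for the exogenous one) gives $S_0=D_0+\mu(0)$, with $\mu(0)$ linear in $\Sigma$; in particular the common term $D_0\Sigma$ cancels in the difference, so
\[
  \sum_{j=1}^I\mathrm{CE}_j^{en}-\sum_{j=1}^I\mathrm{CE}_j^{ex}=\bigl(\mu^{en}(0)-\mu^{ex}(0)\bigr)\Sigma+I\bigl(g_1^{en}(0)-g_1^{ex}(0)\bigr).
\]

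Next I would substitute the explicit quantities. From Theorem~\ref{thm:radner-endogenous} and \eqref{en_ode_solutions_f}, $\mu^{en}(0)=-\tfrac{2a\kappa\sigma_D^2\Sigma}{2I\kappa+a\sigma_D^2}$ and $g_1^{en}(0)=\tfrac{2a\kappa^2\sigma_D^2\Sigma^2}{(2I\kappa+a\sigma_D^2)^2}+\sigma_{Y'}^2\int_0^1 g_{33}^{en}(s)\,ds$; the exogenous counterparts $\mu^{ex}(0)=-\tfrac{a\sigma_D^2\Sigma}{I}$ and $g_1^{ex}(0)=\tfrac{a\sigma_D^2\Sigma^2}{2I^2}+\sigma_{Y'}^2\int_0^1 g_{33}^{ex}(s)\,ds$ are read off from Theorem~\ref{thm:radner-exogenous} and its proof (and coincide with the $\kappa\to\infty$ limits of the endogenous ones). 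Collecting the $\Sigma^2$-terms, putting everything over the common denominator $2I(2I\kappa+a\sigma_D^2)^2$, and using the identity $4I^2\kappa^2-(2I\kappa+a\sigma_D^2)^2=-a\sigma_D^2(4I\kappa+a\sigma_D^2)$ collapses the $\Sigma^2$-coefficient to $\tfrac{a^3\sigma_D^6}{2I(2I\kappa+a\sigma_D^2)^2}$, yielding
\[
  \sum_{j=1}^I\mathrm{CE}_j^{en}-\sum_{j=1}^I\mathrm{CE}_j^{ex}=\frac{a^3\sigma_D^6}{2I(2I\kappa+a\sigma_D^2)^2}\,\Sigma^2+I\sigma_{Y'}^2\int_0^1\bigl(g_{33}^{en}(u)-g_{33}^{ex}(u)\bigr)\,du.
\]
The endogenous aggregate welfare exceeds the exogenous one iff this right-hand side is positive, which is exactly \eqref{ineq:welfare} after dividing by $\tfrac{a^3\sigma_D^6}{2I(2I\kappa+a\sigma_D^2)^2}$ and moving the integral term to the other side.

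For the final claim I would observe that the right-hand side of \eqref{ineq:welfare} does not depend on $\Sigma$: the core ODEs \eqref{en_core_ode} (equivalently, the $g_{33}$-equation in \eqref{eqn:endogenous-ODEs}) and the corresponding exogenous ODE contain no $\Sigma$, so $g_{33}^{en}$ and $g_{33}^{ex}$ are $\Sigma$-independent, hence so is $\int_0^1(g_{33}^{en}-g_{33}^{ex})\,du$, which is moreover finite by continuity on $[0,1]$. Since $a,\sigma_D^2,\kappa,\sigma_{Y'}^2,I>0$, the right-hand side of \eqref{ineq:welfare} is a fixed real number, so the strict inequality holds for all sufficiently large $\Sigma$.

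I expect the only real work to be bookkeeping rather than analysis: assembling the exogenous-model quantities $\mu^{ex}(0)$ and $g_1^{ex}(0)$ from Theorem~\ref{thm:radner-exogenous} in a form parallel to the endogenous ones, and then carrying out the one-line-but-delicate cancellation that reduces the $\Sigma^2$-coefficient to $\tfrac{a^3\sigma_D^6}{2I(2I\kappa+a\sigma_D^2)^2}$. Once those are in hand, the equivalence and the large-$\Sigma$ conclusion follow immediately.
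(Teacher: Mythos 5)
Your proposal is correct and follows essentially the same route as the paper: decompose the welfare difference as $\Sigma\left(\mu^{en}(0)-\mu^{ex}(0)\right)+I\left(g_1^{en}(0)-g_1^{ex}(0)\right)$, reduce the $\Sigma^2$-coefficient to $\tfrac{a^3\sigma_D^6}{2I(2I\kappa+a\sigma_D^2)^2}$, and note that the $(g_{33},\beta)$ systems are $\Sigma$-independent so the right-hand side of \eqref{ineq:welfare} is a fixed constant. The only cosmetic difference is that you substitute the closed forms of $\mu(0)$ and $g_1(0)$ from \eqref{en_ode_solutions}--\eqref{en_ode_solutions_f} and \eqref{ex_ode_solutions} directly, whereas the paper works with $\mu'$ and $g_1'$ and integrates the terminal-value ODEs; the algebra is identical.
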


\begin{proof}
We let the superscripts $ex$ and $en$ represent exogenous and endogenous quantities, respectively.  Theorems~\ref{thm:radner-exogenous} and \ref{thm:radner-endogenous} provide us with the existence of equilibria described by the systems of equations given in \eqref{eqn:exogenous-ODEs}-\eqref{eqn:exogenous-abm} and \eqref{eqn:endogenous-ODEs}-\eqref{eqn:endogenous-abm}, respectively.  Using the equilibrium clearing conditions from Definitions~\ref{def:exo-eq} and \ref{def:end-eq}, we see that
\begin{align*}
  \sum_{j=1}^I \text{CE}^{en}_j-\sum_{j=1}^I \text{CE}^{ex}_j
  &= \Sigma\left(S^{en}_0-S^{ex}_0\right) + I\left(g^{en}_1(0)-g^{ex}_1(0)\right)\\
  &= \Sigma\left(\mu^{en}(0)-\mu^{ex}(0)\right) +I\left(g^{en}_1(0)-g^{ex}_1(0)\right).
\end{align*}
The system of equations \eqref{eqn:exogenous-ODEs}-\eqref{eqn:exogenous-abm} decouples so that
\begin{align*}
  (g^{ex}_1)'(t)&= -\frac{a \Sigma ^2 \sigma_D^2}{2 I^2}-\sigma_{Y'}^2 g^{ex}_{33}(t),\\
  \mu^{ex}(t)&=\frac{a \Sigma  \sigma_D^2}{I}(t-1).
\end{align*}
and the system of equations \eqref{eqn:endogenous-ODEs}-\eqref{eqn:endogenous-abm} decouples so that
\begin{align*}
  (g^{en}_1)'(t)&= -\frac{2 a \kappa^2 \Sigma ^2 \sigma_D^2}{\left(a \sigma_D^2+2 \kappa I\right)^2}-\sigma_{Y'}^2 g^{en}_{33}(t),\\
  \mu^{en}(t)&=\frac{2 a\kappa \Sigma   \sigma_D^2}{a \sigma_D^2+2 \kappa I}(t-1).
\end{align*}
These calculations show us that in the exogenous case,
$$
  \Sigma(\mu^{ex})'(0) + I (g^{ex}_1)'(0) = \frac{a \Sigma ^2 \sigma_D^2}{2 I}-g^{ex}_{33}(0) I \sigma_{Y'}^2,
$$
and in the endogenous case, we have
$$
  \Sigma(\mu^{en})'(0) + I (g^{en}_1)'(0) = \frac{2 a \kappa \Sigma ^2 \sigma_D^2 \left(a \sigma_D^2+\kappa I\right)}{\left(a \sigma_D^2+2 \kappa I\right)^2}-g^{en}_{33}(0) I \sigma_{Y'}^2.
$$
Therefore, the welfare difference is calculated as
\begin{align}\label{eqn:welfare-difference}
  \sum_{j=1}^I \text{CE}^{en}_j-\sum_{j=1}^I \text{CE}^{ex}_j
  &= \frac{a^3 \Sigma ^2 \sigma_D^6}{2 I \left(a \sigma_D^2+2 \kappa I\right)^2}
    + I\sigma_{Y'}^2\int_0^1 \left(g^{en}_{33}(u)-g^{ex}_{33}(u)\right)du.
\end{align}
Using this, we conclude that $  \sum_{j=1}^I \text{CE}^{en}_j> \sum_{j=1}^I \text{CE}^{ex}_j$ if and only if \eqref{ineq:welfare} holds.

Since the two-ODE systems -- $(g^{ex}_{33},\beta^{ex})$ in Theorem~\ref{thm:radner-exogenous} and $(g^{en}_{33},\beta^{en})$ in Theorem~\ref{thm:radner-endogenous} -- do not depend on $\Sigma$, the right-hand side in \eqref{ineq:welfare} does not depend on $\Sigma$. Therefore, we conclude that \eqref{ineq:welfare} holds for large enough $\Sigma$.
\end{proof}


In words, Proposition~\ref{prop:welfare} implies that for a sufficiently large stock supply $\Sigma$, the endogenous noise tracker model is preferable for the exponential investors compared to the exogenous noise trader model when $Y_0=Y'_0=0$.


\begin{figure}[t]
	\begin{center}
	$
			\begin{array}{cc}
			\includegraphics[width=0.4\textwidth]{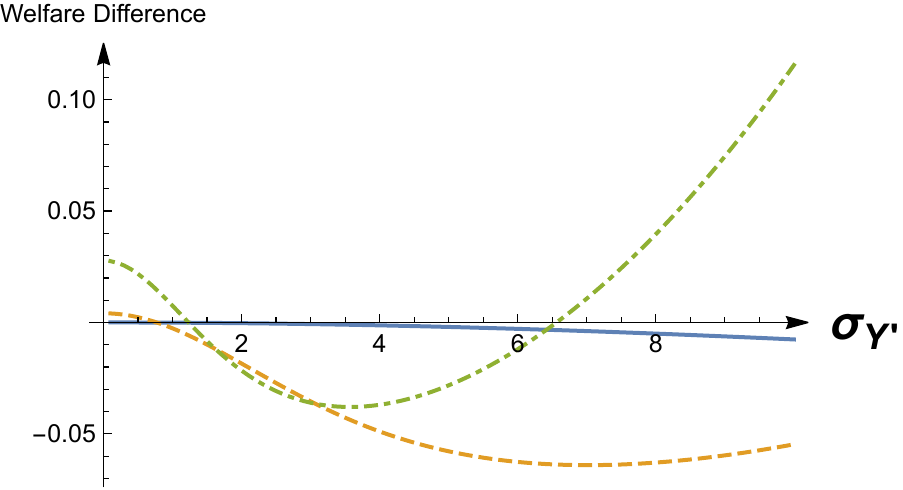} &
			\includegraphics[width=0.4\textwidth]{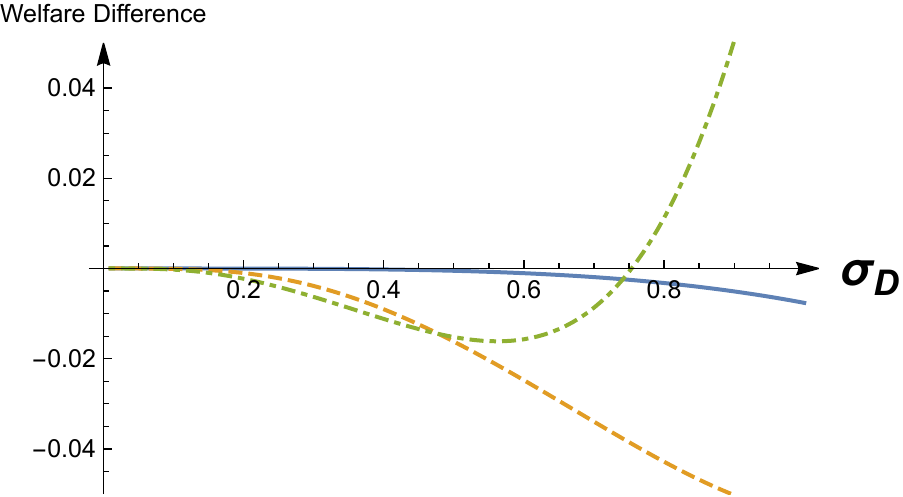}  \\
			\bigskip
			I=10, \, \sigma_D=1,\, \kappa=5 & I=10, \sigma_{Y'}=10, \, \kappa=5 \\ 
			\includegraphics[width=0.4\textwidth]{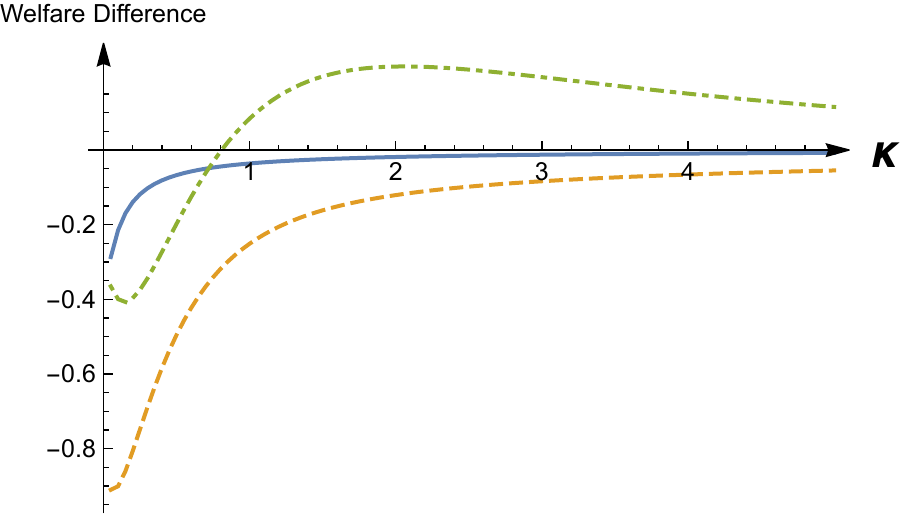} &
			\includegraphics[width=0.4\textwidth]{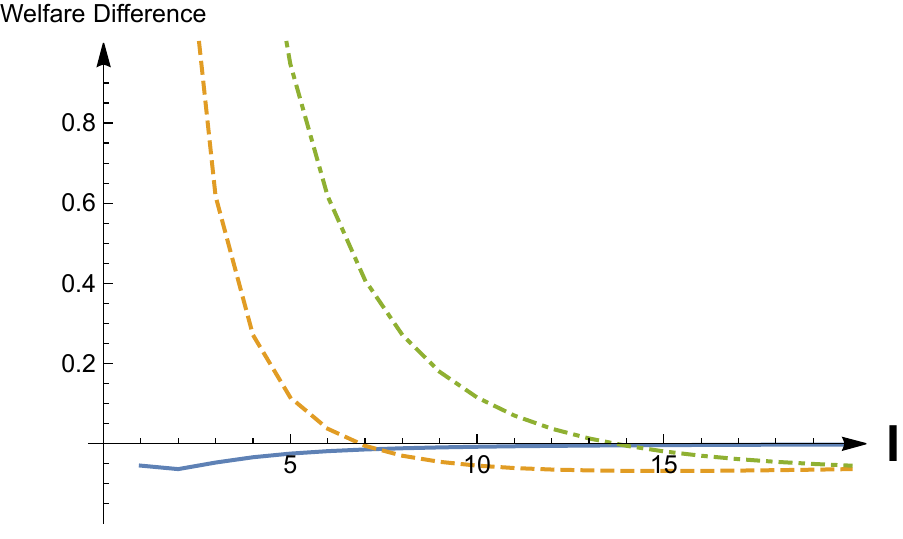} \\ 
			I=10, \, \sigma_D=1,\, \sigma_{Y'}=10 & \sigma_D=1, \sigma_{Y'}=10, \, \kappa=5
			\end{array}$
	\end{center}
	\caption{Graphs of welfare difference for varying $\sigma_{Y'}, \sigma_D, \kappa$, and $I$. The common set of parameters are $Y_0=Y'_0=0$, $\Sigma=1$ and $a=1$ (\textcolor{myblue}{\bf ---}), $10$ (\textcolor{myorange}{-\,-\,-}), $20$ (\textcolor{mygreen}{- $\cdot$ -}).  }
	\label{figure:welfare}
\end{figure}

Figure~\ref{figure:welfare} plots the welfare difference across a common set of parameters:  $Y_0=Y'_0=0$, $\Sigma=1$, $I=10$, $\sigma_D=1$, $\sigma_{Y'}=10$, $\kappa=5$, and $a\in\{1,10,20\}$.  By varying these parameters, we observe non-monotone behavior.  Amongst the two-dimensional parameter choices $(\sigma_{Y'},a)$, $(\sigma_{D},a)$, $(\kappa,a)$, and $(I,a)$, we see similar non-trivial, non-monotone dependencies.  The behavior exhibited in these plots shows that the modeling choices involved with noise traders have complex effects on welfare.  Our model suggests that care should be taken when drawing conclusions about welfare effects when noise traders are present.

\appendix

{\color{black}

\section{Derivation of the ODE system}\label{app:A}

In this appendix, we explain how to derive the ODE system in \eqref{eqn:endogenous-ODEs}-\eqref{eqn:endogenous-abm}.
Using the expression of $S_t$ in \eqref{R11}, we observe that $dX_t^\theta$ in \eqref{def:wealth} becomes
\begin{align}
dX_t^\theta = \theta_t \left( \mu'(t)+ (\alpha(t)+\beta'(t))Y'_t + \alpha'(t) Y_t  \right)dt + \theta_t \beta(t) dY'_t + \theta_t dD_t.\label{wealth2}
\end{align}
We apply Ito's formula to $V(t,X^\theta_t,Y_t,Y'_t)$ for the function $V$ in \eqref{eqn:val-fn-form} and obtain
\begin{align}
&\frac{dV(t,X^\theta_t,Y_t,Y'_t)}{-a V(t,X^\theta_t,Y_t,Y'_t)}= \Big(  
-\tfrac{a\left(\sigma_D^2+\sigma_{Y'}^2\beta(t)^2\right)}{2} \theta_t^2  \nonumber\\
&+ \left( \mu'(t) - a \sigma_{Y'}^2\beta(t)(g_{23}(t)Y_t + g_3(t))+\alpha'(t)Y_t + (\alpha(t)-2a\sigma_{Y'}^2 g_{33}(t)\beta(t)+ \beta'(t))Y'_t
 \right)\theta_t\nonumber\\
 & + g_1'(t)+g_2'(t)Y_t + g_3'(t) Y'_t+ g_{22}'(t)Y_t^2 + g_{23}'(t)Y_t Y'_t + g_{33}'(t) Y'^2_t \nonumber\\
& +g_2(t) Y'_t + 2 g_{22}(t) Y_t Y'_t + g_{23}(t)Y'^2_t + \tfrac{\sigma_{Y'}^2(2g_{33}(t)-a(g_3(t)+g_{23}(t) Y_t + 2g_{33}(t)Y'_t)^2)}{2}   \Big)dt \nonumber\\
& + \theta_t dD_t + \left( \theta_t \beta(t) + g_3(t) + g_{23}(t)Y_t + 2 g_{33}(t)Y'_t \right) dY'_t, \label{dV_expand}
\end{align}
where we use \eqref{wealth2} and \eqref{pp1}. Observe that the $dt$ term above is maximized when $\theta_t$ equals $ \theta_{j,t}$ below: 
\begin{align}
 \theta_{j,t}=\tfrac{\mu'(t) - a \sigma_{Y'}^2\beta(t)(g_{23}(t)Y_t + g_3(t))+\alpha'(t)Y_t + (\alpha(t)-2a\sigma_{Y'}^2 g_{33}(t)\beta(t)+ \beta'(t))Y'_t}{a(\sigma_D^2 + \sigma_{Y'}^2 \beta(t)^2)}. \label{opt_theta_1}
\end{align}
We also apply Ito's formula to $V_N(t,X^\theta_t,Y_t,Y'_t)$ for the function $V_N$ in \eqref{eqn:tracker-value-fn-form} and obtain
\begin{align}
&dV_N(t,X^\theta_t,Y_t,Y'_t)- \kappa (\theta_t-Y_t)^2 dt = \Big(  - \kappa (\theta_t-Y_t)^2 \nonumber\\
&+ Y'_t(f_2(t)+2f_{22}(t)Y_t+f_{23}(t)Y'_t)+ \sigma_{Y'}^2 f_{33}(t) + \theta_t (\alpha'(t)Y_t + (\alpha(t)+ \beta'(t))Y'_t+\mu'(t)) \nonumber\\
 & + f_1'(t)+f_2'(t)Y_t + f_3'(t) Y'_t+ f_{22}'(t)Y_t^2 + f_{23}'(t)Y_t Y'_t + f_{33}'(t) Y'^2_t   \Big)dt \nonumber\\
& + \theta_t dD_t + \left( \theta_t \beta(t) + f_3(t) + f_{23}(t)Y_t + 2 f_{33}(t)Y'_t \right) dY'_t, \label{dVN_expand}
\end{align}
where we use \eqref{wealth2} and \eqref{pp1}. Observe that the $dt$ term above is maximized when $\theta_t$ equals $ \theta_{N,t}$ below: 
\begin{align}
 \theta_{N,t}=\tfrac{2\kappa Y_t + \alpha'(t)Y_t + (\alpha(t)+ \beta'(t))Y'_t+\mu'(t)}{2\kappa}. \label{opt_theta_2}
\end{align}

We apply \eqref{opt_theta_1} and \eqref{opt_theta_2} to the market clearing condition in \eqref{eqn:clearing2}, and obtain the ODEs in \eqref{eqn:endogenous-abm}.

We substitute $ \theta_{j,t}$ in \eqref{opt_theta_1} for $\theta_t$ in \eqref{dV_expand}, and by setting the coefficient of the $dt$ term equal to zero and using \eqref{eqn:endogenous-abm}, we obtain 6 equations for $g_1',g_2',g_3',g_{22}',g_{23}',g_{33}'$ in \eqref{eqn:endogenous-ODEs}. Similarly, we substitute $ \theta_{N,t}$ in \eqref{opt_theta_1} for $\theta_t$ in \eqref{dVN_expand}, and by setting the coefficient of the $dt$ term equal to zero and using \eqref{eqn:endogenous-abm}, we obtain 6 equations for $f_1',f_2',f_3',f_{22}',f_{23}',f_{33}'$ in \eqref{eqn:endogenous-ODEs}.

}

\section{Equilibrium with an exogenous noise trader}\label{section:exogenous}

In this appendix, we consider an equilibrium model with an exogenous noise trader. Following  G\^arleanu and Pedersen~\cite{GP16JET} and Bouchard et.\,al.~\cite{BFHMK18FS}, the noise trader's stock position is exogenously given by $Y$ in \eqref{pp1}.
\begin{definition}[Radner equilibrium with exogenous noise trading]\label{def:exo-eq}
  Trading strategies $\hat\theta_1,\ldots,\hat\theta_I$ and a continuous semimartingale $S=(S_t)_{t\in[0,1]}$ form a \textit{Radner equilibrium with exogenous noise trading} if there exists a measure $\widehat\bQ$ under which $S$ is a local martingale such that
  \begin{enumerate}
    \item \textit{Strategies are optimal:}  For $j=1,\ldots,I$, we have $\hat\theta_j\in\sA(\widehat\bQ)$ solves \eqref{optproblem} with measure $\widehat\bQ$, where $S$ is the corresponding stock price process.
    \item \textit{Markets clear:} We have
      \begin{align}\label{eqn:clearing1}
 \sum_{j=1}^{I} \hat\theta_{j,t}+Y_t=\Sigma,\quad t\in[0,1].
 \end{align} 
  \end{enumerate}
\end{definition}

The value function is again as in~\eqref{eqn:val-fn-form}-\eqref{eqn:term-cond} but for different coefficient functions compared to Theorem~\ref{thm:radner-endogenous}.  The following theorem establishes the existence of a Radner equilibrium with exogenous noise trading.

\begin{theorem}[Radner existence with exogenous noise trading] \label{thm:radner-exogenous} Let $\Sigma\ge0$, $a,\sigma_D^2>0$, and $\sum_{j=1}^I \theta_{j,0-} + Y_0=\Sigma$. Then, there exists a unique smooth solution to the coupled system of ODEs 
for $t\in[0,1]$:
\begin{align*}
  \begin{split}
    g'_{33}(t)&= 2 a \sigma_{Y'}^2 g_{33}(t)^2-\frac{\beta(t)}{I},
    \quad g_{33}(1)=0,\\
    \beta'(t) &= 2 a \sigma_{Y'}^2 g_{33}(t) \beta(t)-\frac{a \sigma_D^2}{I}(1-t),
    \quad \beta(1)=0,
  \end{split}
\end{align*}
such that a Radner equilibrium with exogenous noise trading exists. The equilibrium stock price process is given by
\begin{align}\label{eqn:eq-Shat}
S_t &:=D_t +\mu(t)+\alpha(t) Y_t+\beta(t) Y'_t,\quad t\in[0,1],
\end{align}
where for $t\in[0,1]$,
\begin{align*}
  \mu(t) &:= -\tfrac{a \sigma_D^2 \Sigma }{I}(1-t),\\
  \alpha(t) &:= \tfrac{a \sigma_D^2}{I}(1-t).
\end{align*}

Furthermore, there exists $\widehat\bQ\in\sM$ such that each investor optimally holds $\hat\theta_j\in\sA(\widehat\bQ)$ with
\begin{equation}\label{eqn:eq-thetas}
  \hat\theta_{j,t}=\frac{\Sigma-Y_t}{I}, \quad t\in[0,1],\ j\in\{1,\ldots,I\}.
\end{equation}
\end{theorem}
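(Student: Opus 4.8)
The plan is to follow the three‑part template of the proof of Theorem~\ref{thm:radner-endogenous}, exploiting the fact that the exogenous model is a strict simplification: there is no endogenous optimizer besides the exponential investors, so there is no auxiliary value function and no $f$‑coefficients, and the core ODE system is two‑dimensional with a simpler vector field. The only step that is not a routine transcription of already‑proven material is the global existence of the core ODE, because of the quadratic nonlinearity $2a\sigma_{Y'}^2 g_{33}^2$; this is the main obstacle and is defused exactly as in Lemma~\ref{en_core_ode_lemma}.

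\emph{Step 1: the core ODE.} I would first establish existence, uniqueness, positivity, and polynomial a priori bounds for $(g_{33},\beta)$. Reversing time via $\tau:=1-t$ and setting $w_1(\tau):=\beta(1-\tau)$, $w_2(\tau):=g_{33}(1-\tau)$ turns the terminal‑value problem into the initial‑value problem
\[
  w_1'(\tau)=\tfrac{a\sigma_D^2}{I}\tau-2a\sigma_{Y'}^2 w_1 w_2,\quad w_1(0)=0,\qquad w_2'(\tau)=\tfrac{w_1}{I}-2a\sigma_{Y'}^2 w_2^2,\quad w_2(0)=0.
\]
The vector field is locally Lipschitz, so Picard-Lindel\"of (Theorem~II.1.1 in Hartman~\cite{H02}) gives a unique solution on a maximal interval $[0,\bar\tau)$. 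Differentiating at $\tau=0$ yields $w_1''(0)=a\sigma_D^2/I>0$ and $w_2'''(0)=a\sigma_D^2/I^2>0$, so $w_1,w_2>0$ just after $0$; defining $t_0:=\inf\{\tau>0:\ w_1(\tau)=0\text{ or }w_2(\tau)=0\}$, the ODE forces $w_1'(t_0)>0$ if $w_1(t_0)=0$ and $w_2'(t_0)=w_1(t_0)/I>0$ if $w_2(t_0)=0$ (with $w_1(t_0)>0$ in the latter case), so $t_0\ge\bar\tau$ and $w_1,w_2>0$ on $(0,\bar\tau)$. Then $w_1'(\tau)\le\tfrac{a\sigma_D^2}{I}\tau$ and $w_2'(\tau)\le w_1(\tau)/I$ give $w_1(\tau)\le\tfrac{a\sigma_D^2}{2I}\tau^2$ and $w_2(\tau)\le\tfrac{a\sigma_D^2}{6I^2}\tau^3$, which rules out finite‑time blow‑up, so $\bar\tau=\infty$. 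Translating back, $(g_{33},\beta)$ exists, is unique, is smooth on $[0,1]$, and is strictly positive on $[0,1)$.

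\emph{Step 2: the equilibrium ansatz and the remaining coefficients.} I would posit the stock price~\eqref{eqn:eq-Shat} and the value‑function form~\eqref{eqn:val-fn-form}--\eqref{eqn:term-cond}, compute $dX^\theta$ from~\eqref{def:wealth}, and apply It\^o's formula to $V(t,X^\theta_t,Y_t,Y'_t)$. Maximizing the drift pointwise over $\theta$ gives a candidate strategy that, together with the clearing condition~\eqref{eqn:clearing1} and $\sum_j\theta_{j,0-}+Y_0=\Sigma$, is the symmetric holding $\hat\theta_{j,t}=(\Sigma-Y_t)/I$ of~\eqref{eqn:eq-thetas}; matching the deterministic terms then forces $\mu(t)=-\tfrac{a\sigma_D^2\Sigma}{I}(1-t)$, $\alpha(t)=\tfrac{a\sigma_D^2}{I}(1-t)$, and the core ODE for $(\beta,g_{33})$ of Step~1. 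Setting the remaining drift coefficients (those of $1$, $Y_t$, $Y'_t$, $Y_t^2$, $Y_tY'_t$) to zero produces linear ODEs for $g_1,g_2,g_3,g_{22},g_{23}$ driven by the already‑constructed $\beta,g_{33}$; these are solved by elementary integrals that are finite since $a,\sigma_D^2>0$ and $\beta,g_{33}$ are bounded (for instance $g_1'(t)=-\tfrac{a\Sigma^2\sigma_D^2}{2I^2}-\sigma_{Y'}^2 g_{33}(t)$, consistent with the welfare computation in the proof of Proposition~\ref{prop:welfare}).

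\emph{Step 3: verification and uniqueness.} I would verify optimality by convex duality exactly as in the proof of Theorem~\ref{thm:radner-endogenous}: with $\widehat V_t:=V(t,X^{\hat\theta}_t,Y_t,Y'_t)$, define $\widehat\bQ$ by $d\widehat\bQ/d\P:=\widehat V_1/\widehat V_0$; the ODE system makes the drift of $\widehat V$ vanish, and since its diffusion coefficient is affine in $(Y_t,Y'_t)$ with continuous time‑coefficients, Corollary~3.5.16 in Karatzas and Shreve~\cite{KS91} shows $\widehat V$ is a true $\P$‑martingale, so $\widehat\bQ\in\sM$ is a probability measure with $\widehat V_0=\E[-e^{-a\widehat X_1}]$. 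Under $\widehat\bQ$, Girsanov shifts of $D$ and $Y'$ analogous to~\eqref{def:Qhat-BM} are Brownian motions, $d\widehat X_t=\hat\theta_t(d\widehat D_t+\beta(t)\,d\widehat Y'_t)$, and a Gronwall estimate on $\E^{\widehat\bQ}[(Y'_{t\wedge\tau_k})^2]$ along the exit times $\tau_k$ of $Y'$ from $[-k,k]$, together with Fatou's lemma, gives $\E^{\widehat\bQ}[\int_0^1(Y'_t)^2\,dt]<\infty$, hence $\widehat X$ is a $\widehat\bQ$‑martingale. The Fenchel-Legendre inequality then yields $\E[U(X^\theta_1)]\le\E[U(\widehat X_1)]$ for every $\theta\in\sA(\widehat\bQ)$ with $X^\theta_0=\widehat X_0$, proving $\hat\theta_j$ optimal; clearing is immediate since $I\cdot(\Sigma-Y_t)/I+Y_t=\Sigma$. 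Uniqueness within the class of prices of the form~\eqref{eqn:eq-Shat} with continuously differentiable $\mu,\alpha,\beta$ follows from the local Lipschitz property of the full ODE system.
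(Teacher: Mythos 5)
Your proposal is correct and follows essentially the same route as the paper, which handles Theorem~\ref{thm:radner-exogenous} by reducing to the two‑dimensional time‑reversed IVP (Lemma~\ref{ex_core_ode_lemma}), constructing the remaining coefficients explicitly in~\eqref{ex_ode_solutions}, and reusing the verification argument from Theorem~\ref{thm:radner-endogenous} verbatim minus the noise‑tracker optimization. Your Step~1 reproduces the positivity/a‑priori‑bound argument of Lemma~\ref{en_core_ode_lemma} with the correct Taylor coefficients $w_1''(0)=a\sigma_D^2/I$ and $w_2'''(0)=a\sigma_D^2/I^2$ and the matching polynomial bounds, and Steps~2--3 are faithful transcriptions of the endogenous‑case duality verification, so the only minor excess is the closing uniqueness remark, which the theorem statement does not actually assert.
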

\begin{proof}
The proof follows along nearly identical lines as the proof of Theorem~\ref{thm:radner-endogenous}. To be specific, by the same way as in Lemma~\ref{en_core_ode_lemma}, we obtain the following ODE result:
\begin{lemma}\label{ex_core_ode_lemma}
Let $a, \sigma_D^2>0$. Then, the following two-dimensional initial value problem has a unique solution for $t\in [0,\infty)$:
\begin{equation}
\begin{split}\label{ex_core_ode}
z_1'(t) &=\left(\tfrac{a \sigma_D^2}{I} \right) t -2a\sigma_{Y'}^2 z_1(t)z_2(t) ,\quad z_1(0)=0,\\
z_2'(t)&= \tfrac{z_1(t)}{I} - 2a \sigma_{Y'}^2 z_2(t)^2, \quad z_2(0)=0.
\end{split}
\end{equation}
\end{lemma}
Then, we define $\alpha, \beta, \mu, g_{33}, g_{23}, g_{22}, g_{3}, g_{2}, g_{1}$ in terms of $z_1$ and $z_2$ in Lemma \ref{ex_core_ode_lemma}:
\begin{equation}
\begin{split}\label{ex_ode_solutions}
\alpha(t)&:=\tfrac{a \sigma_D^2}{I}(1-t),\\
\beta(t)&:=z_1(1-t),\\
\mu(t)&:=-\tfrac{a \sigma_D^2 \Sigma }{I}(1-t),\\
g_{33}(t)&:=z_2(1-t),\\
g_{23}(t)&:=\tfrac{z_1(1-t)}{I},\\
g_{22}(t)&:=\tfrac{a \sigma_D^2}{2I^2}(1-t),\\
g_{3}(t)&:=-\tfrac{ \Sigma }{I}\,  z_1(1-t),\\
g_{2}(t)&:=-\tfrac{a \sigma_D^2 \Sigma }{I^2}(1-t),\\
g_1(t)&:=\tfrac{a \sigma_D^2 \Sigma^2}{2I^2}(1-t) + \sigma_{Y'}^2 \int_t^1 g_{33}(s)\, ds.
\end{split}
\end{equation}
By explicit computations using \eqref{ex_core_ode}, we check that the following ODEs are satisfied:
\begin{align}\label{eqn:exogenous-ODEs}
\begin{split}
g'_1(t) &= \frac{a I^2 \sigma_{Y'}^2 g_{3}(t)^2-a \Sigma ^2 \left(\sigma_D^2+\sigma_{Y'}^2 \beta(t)^2\right)-2 I^2 \sigma_{Y'}^2 g_{33}(t)}{2 I^2},\quad g_1(1)=0,\\
g'_{2}(t)&=  \frac{a \left(I^2 \sigma_{Y'}^2 g_{23}(t) g_{3}(t)+\Sigma  \left(\sigma_D^2+\sigma_{Y'}^2 \beta(t)^2\right)\right)}{I^2},\quad g_{2}(1)=0,\\
g'_3(t)&= 2 a \sigma_{Y'}^2 g_{3}(t) g_{33}(t)-g_{2}(t),\quad g_3(1)=0,\\
g'_{22}(t)&=  -\frac{a \left(\sigma_{Y'}^2 \left(\beta(t)^2-I^2 g_{23}(t)^2\right)+\sigma_D^2\right)}{2 I^2}
,\quad g_{22}(1)=0,\\
g'_{23}(t)&= 2 a \sigma_{Y'}^2 g_{23}(t) g_{33}(t)-2 g_{22}(t),\quad g_{23}(1)=0,\\
g'_{33}(t)&= 2 a \sigma_{Y'}^2 g_{33}(t)^2-g_{23}(t),\quad g_{33}(1)=0,
\end{split}
\end{align}
and
\begin{align}\label{eqn:exogenous-abm}
\begin{split}
\alpha'(t)&=-\frac{a \left(\sigma_{Y'}^2 \beta(t) (\beta(t)-I g_{23}(t))+\sigma_D^2\right)}{I},\quad \alpha(1)=0,\\ 
\beta'(t) &= 2 a \sigma_{Y'}^2 g_{33}(t) \beta(t)-\alpha(t),\quad \beta(1)=0,\\
\mu'(t) &= \frac{a \left(\sigma_{Y'}^2 \beta(t) (I g_{3}(t)+\Sigma  \beta(t))+\Sigma  \sigma_D^2\right)}{I},\quad \mu(1)=0,
\end{split}
\end{align}

Finally, the verification of the admissibility and optimality of \eqref{eqn:eq-thetas} can be checked by the same way as in the proof of Theorem~\ref{thm:radner-endogenous}. The market clearing condition is satisfied also.
\end{proof}

\bibliographystyle{plain}
\bibliography{finance_bib}

\end{document}